%
\documentclass{article}
\usepackage{makeidx}  
\usepackage[bookmarks]{hyperref}
\usepackage{
	amscd,
	amsmath,
	amssymb,
	amsthm,
	array,
	caption,
	chngpage,
	color,
	float,
	graphicx,
	latexsym,
	mathrsfs,
	multicol,
	multirow,
	pgfgantt,
	braket,
	sectsty,
	subcaption,
	tikz,
	url,
	verbatim,
	wrapfig
}

\usetikzlibrary{automata, positioning}
\usetikzlibrary{decorations.pathmorphing}
\tikzset{snake it/.style={decorate, decoration=snake}}

\newcommand{\abs}[1]{\lvert#1\rvert}

\renewcommand{\phi}{\varphi}

\newcommand{\CP}{\mathbb C\mathbb P}
\DeclareMathOperator{\GL}{GL}
\DeclareMathOperator{\PU}{PU}
\DeclareMathOperator{\U}{U}
\DeclareMathOperator{\Orth}{O}
\DeclareMathOperator{\SU}{SU}
\DeclareMathOperator{\SO}{SO}
\DeclareMathOperator{\SL}{SL}
\DeclareMathOperator{\PSL}{PSL}
\DeclareMathOperator{\perm}{perm}
\DeclareMathOperator{\Alt}{Alt}

\newtheorem{thm}{Theorem}
\newtheorem{cor}[thm]{Corollary}

\newtheorem{con}[thm]{Conjecture}

\newtheorem{df}[thm]{Definition}
\newtheorem{rem}[thm]{Remark}

\begin{document}
\title{Superposition as memory: unlocking quantum automatic complexity}
\author{Bj\o rn Kjos-Hanssen\thanks{
		This work
		was partially supported by
		a grant from the Simons Foundation (\#315188 to Bj\o rn Kjos-Hanssen).
		This material is based upon work supported by the National Science Foundation under Grant No.\ 1545707.
		}
}

\maketitle              

\begin{abstract}
	We define the semi-classical quantum automatic complexity $Q_{s}(x)$ of a word $x$ as the infimum in lexicographic order of those pairs of nonnegative integers $(n,q)$ such that there is a subgroup $G$ of the projective unitary group $\PU(n)$ with $\abs{G}\le q$ and
	with $U_0,U_1\in G$ such that, in terms of a standard basis $\{e_k\}$ and with $U_z=\prod_k U_{z(k)}$, we have $U_x e_1=e_2$ and $U_y e_1 \ne e_2$ for all $y\ne x$ with $|y|=|x|$.
	We show that $Q_s$ is unbounded and not constant for strings of a given length.
	In particular,
	\[
		Q_{s}(0^21^2)\le (2,12) < (3,1) \le Q_{s}(0^{60}1^{60})
	\]
	and $Q_s(0^{120})\le (2,121)$.
\end{abstract}

\section{Introduction}
	\paragraph{Quantum locks.}
		Imagine a lock with two states, ``locked'' and ``unlocked'', which may be manipulated using two operations, called 0 and 1. Moreover, the only way to (with certainty) unlock using four operations is to do them in the sequence 0011, i.e., $0^n1^n$ where $n=2$. In this scenario one might think that the lock needs to be in certain further states after each operation, so that there is some memory of what has been done so far. Here we show that this memory can be entirely encoded in superpositions of the two basic states ``locked'' and ``unlocked'', where, as dictated by quantum mechanics, the operations are given by unitary matrices. Moreover, we show using the Jordan--Schur lemma that a similar lock is not possible for $n=60$.
	\paragraph{Quantum security.} A problem with traditional padlocks is that a clever lock-breaker can seek to detect what internal state the lock is in part-way through the entering of the lock code. This problem disappears when the internal states are just superpositions of ``locked'' and ``unlocked''. Of course, there may be a positive probability that the system when observed part-way through the entering of the lack code is observed in the ``unlocked'' state. To remedy this, one could use a sequence of many locks, say $10^k$ for a suitable positive integer $k$, and add a third ``permanently locked'' state which, once reached, cannot be left. Then observing the lock will likely eventually result in landing in the permanently locked state. (In the case of the code $0^n1^n$, the permanently locked state might be implemented as having a probability related to the difference in the number of 0s and 1s entered in the code so far.)
	Note that in theory this is a purely quantum phenomenon: any simulation of the quantum device using classical hardware will be subject to the original problem that a lock-breaker may try to discern the internal states of the classical hardware.

	\paragraph{Quantum automata.}
		One of the fascinating aspects of quantum mechanics is how our understanding of states is enriched, with observable states, pure states, and mixed states. The notion of \emph{state} of a finite automaton begs for a generalization to the quantum realm.
		Indeed, quantum finite automata have been studied already \cite{Kondacs}.

		On the other hand automatic complexity introduced by Shallit and Wang \cite{MR1897300} has been related it to model selection in statistics \cite{fewpaths} 
		and to pseudorandomness generation with linear feedback shift registers \cite{Kjos-IEEE}.
		Other approaches to automatic complexity \cite{2017arXiv170109060S,MR3389924,MR2866563} yield better insight into infinite words.

		We shall consider complexity with respect to an arbitrary semigroup before considering the quantum case of the projective unitary group $\PU(n)$.

	\begin{df}\label{at-most}
		Let $T_X$ denote the set of all transformations of the set $X$; $T_X=\{f\mid f:X\to X\}$.
		The complexity of a string $x\in\{0,1\}^n$, $n\ge 0$, is the class of all semigroup actions $\varphi:G\to T_X$ for semigroups $G$ and sets $X$,
		with
		\begin{itemize}
			\item two\footnote{noncommuting, unless $x$ is a unary string like $0^n$} elements
		$\delta_0, \delta_1\in G$, inducing $\delta_y=\prod_{k=1}^{\abs{y}}\delta_{y(k)}$ for each $y\in\{0,1\}^n$, $y=y(1)\cdots y(\abs{y})$;
			\item an initial state $\alpha\in X$; and
			\item a final state $\omega\in X$,
		\end{itemize}
		such that $x$ is the only
		$y\in\{0,1\}^n$
		for which $\delta_y\alpha:=\varphi(\delta_y)\alpha=\omega$.

		In this case we say that $x$ has \emph{complexity at most} $\varphi$, or, if $\varphi$ is understood, \emph{complexity at most} $G$.
	\end{df}

\subsection{Quantum automatic complexity}
	Let $e^{(n)}_j$, $1\le j\le n$ be the standard basis for $\mathbb C^{n}$.
	Let $\U(n)$ be the group of unitary complex $n\times n$ matrices and let $\PU(n)$ be the projective unitary group.

	For $n\times n$ matrices $U_0$ and $U_1$ and a binary string $x$, we define
	\[
		U_x =
		\prod_{k=1}^{\abs{x}} U_{x(k)}.
	\]

	\begin{df}
		A \emph{quantum deterministic finite automaton} (quantum DFA) $M$ with $q$ states consists of an \emph{initial state} $\alpha\in{\CP}^q$,
		a \emph{final state} $\omega$,
		and $\delta_0, \delta_1\in \PU(q)$.
		We say that $M$ \emph{accepts} a word $x\in\{0,1\}^n$, $n\ge 0$ if
		\[
			\delta_x\alpha = \omega.
		\]
		Let $x\in\{0,1\}^n$, $n\ge 0$.
		The \emph{quantum automatic complexity of $x$}, $Q(x)$,
		is the least $q$ such that there exists a quantum DFA $M$ with $q$ states
		such that for all $y\in\{0,1\}^{n}$, $M$ accepts $y$ iff $y=x$.
		\begin{itemize}
			\item
			If we additionally require that $\delta_0,\delta_1$ generate a finite subgroup of $\PU(q)$, we obtain the \emph{finite quantum automatic complexity} $Q_f(x)$.
			\item
			If we require $\alpha=e_1$ and $\beta=e_2$ then we obtain \emph{semi-classical quantum automatic complexity} $Q_s$.
			\item
			If we require both of the extra requirements for $Q_s$ and $Q_f$, we get $Q_{sf}$.
		\end{itemize}
	\end{df}
	We can write $Q_s(x)\le (n,\infty)$ if $Q_s(x)\le n$, and $Q_s(x)\le (n,f)$ if $Q_{sf}(x)\le n$ as witnessed by a finite group of order $f$.
	This way we see $A_{\perm}$, the automatic complexity \cite{MR1897300} with the added restriction that the transition functions be permutations, as an upper bound for $n$ and a lower bound for $f$.
	Ordering these pairs lexicographically, we shall show that
	\[
		(3,121)\le Q_{s}(0^{60}1^{60})\le (121,\infty)
	\]
	assuming the following conjecture.
	\begin{con}\label{rush}
		$A_{\perm}(x)=\abs{x}+1$ for all $x$.
	\end{con}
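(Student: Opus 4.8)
The plan is to prove the two inequalities $A_{\perm}(x)\le\abs{x}+1$ and $A_{\perm}(x)\ge\abs{x}+1$ separately, with the latter being the crux and, I believe, the genuine obstacle that keeps this a conjecture.

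For the upper bound I would exhibit an explicit permutation automaton on the state set $\{0,1,\dots,n\}$, where $n=\abs{x}$, with $\alpha=0$ and $\omega=n$. First define the ``main path'' by $\delta_{x(i+1)}(i)=i+1$ for $0\le i\le n-1$; this specifies each $\delta_a$ ($a\in\{0,1\}$) on $D_a=\{i:x(i+1)=a\}$ as an injection onto $\{i+1:i\in D_a\}$. I would then extend each $\delta_a$ to a permutation of $\{0,\dots,n\}$ by matching the leftover domain to the leftover range via a \emph{non-increasing} bijection, i.e.\ one sending every remaining state to a state of index no larger than its own. A Hall-type count, which reduces to the trivial monotonicity $\abs{D_a\cap[0,t-1]}\le\abs{D_a\cap[0,t]}$, shows such an extension always exists. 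Since the only index-increasing transitions are the $n$ main-path edges $i\mapsto i+1$, any computation from $0$ to $n$ of length exactly $n$ must consist of $n$ increasing steps, hence follow the main path and spell $x$; so $x$ is the unique accepted word of its length and $A_{\perm}(x)\le n+1$.

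For the lower bound I would argue by contradiction, assuming a permutation automaton with state set $X$, $\abs{X}=q\le n$, accepting $x$ and no other word of length $n$. The first step is a reduction to the strongly connected case. Writing $M=P_0+P_1$ for the sum of the permutation matrices of $\delta_0,\delta_1$, the matrix $M/2$ is doubly stochastic, hence has uniform stationary distribution and therefore \emph{no transient states}: its digraph splits into closed strongly connected components, and because no edge leaves a component each $\delta_a$ preserves it. Restricting to the component $K\ni\alpha$ (noting $\omega\in K$, else nothing is accepted) yields two permutations of $K$ with $\abs{K}\le q\le n$ acting strongly connectedly, leaving $(M^n)_{\alpha\omega}$ unchanged. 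So I may assume the automaton is strongly connected with $q\le n$ states, and the number of accepted length-$n$ words is exactly $(M^n)_{\alpha\omega}$. The target is then: in a strongly connected $2$-out-regular digraph on $q\le n$ vertices, $(M^n)_{\alpha\omega}\ne1$. The natural tool is length-preserving pumping: the accepting computation $t_0,\dots,t_n$ visits $n+1>q$ states, so it repeats, giving $t_i=t_j$, $x=uwv$ and $\delta_w(t_i)=t_i$; any second closed walk $w'\ne w$ of the same length at $t_i$ produces a second accepted word $uw'v$, a contradiction. Perron--Frobenius applied to the irreducible $M$, whose Perron value is $2$ with all-ones left and right eigenvectors, gives $(M^n)_{\alpha\omega}\approx 2^n/q\ge 2^n/n\ge 2$, so the bound holds asymptotically.

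The hard part will be upgrading this asymptotic estimate to the exact statement $(M^n)_{\alpha\omega}\ne1$ throughout the tight regime $q\le n$, including the boundary $q\approx n$ and the imprimitive (periodic) case, where subdominant eigenvalues and the residue of $n$ modulo the period could conspire to drive the entry down to $1$. Concretely, the pumping reduction stalls exactly when the distinguished repeat vertex $t_i$ carries only one closed walk of length $j-i$ (for instance a lone self-loop whose partner symbol leaves the vertex), so the real obstacle is to show that among the repeats forced by $q\le n$ at least one admits an alternate equal-length closed walk. I expect closing this gap to require combining the doubly-stochastic/strongly-connected structure above with a careful analysis of cycle lengths and periods in $2$-out-regular digraphs, rather than a single pumping step, and this is precisely where I anticipate the difficulty to lie.
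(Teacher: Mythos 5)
You should know at the outset that the paper contains no proof of this statement: it is posed precisely as Conjecture~\ref{rush}, and the authors only report computational verification for binary strings of length up to $9$. So there is no paper argument to compare against, and a completed version of your proposal would settle an open problem rather than reproduce a known proof. That said, the upper-bound half of your proposal is correct and complete: the layered automaton on $\{0,\dots,n\}$ with main-path edges $i\mapsto i+1$, completed to permutations by a non-increasing bijection on the leftover domain (your Hall condition indeed degenerates to the trivial monotonicity $\abs{D_a\cap[0,t-1]}\le\abs{D_a\cap[0,t]}$), makes every non-main-path edge non-increasing, so a length-$n$ walk from $0$ to $n$ must take $n$ strictly increasing steps and hence spells $x$. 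This gives $A_{\perm}(x)\le\abs{x}+1$, which is exactly the inequality the paper uses implicitly in its bound $Q_f(x)\le(A_{\perm}(x),A_{\perm}(x))\le(\abs{x}+1,\abs{x}+1)$; the entire content of the conjecture is the reverse inequality.

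On that lower bound your reductions are sound --- since $\delta_0,\delta_1$ are permutations, the orbit of $\alpha$ under the generated group is closed under both generators, so restriction to a strongly connected component containing $\alpha$ and $\omega$ is legitimate, and the count of accepted length-$n$ words is indeed $(M^n)_{\alpha\omega}$ for $M=P_0+P_1$ --- but the proof stalls exactly where you say it does, and the stall is fatal as written. The pumping step fails whenever every repeated state on the accepting computation admits only one closed walk of its return length (your lone-self-loop example), and nothing in the doubly stochastic structure rules this out for all repeats simultaneously. The Perron--Frobenius estimate cannot rescue it: the approximation $(M^n)_{\alpha\omega}\approx 2^n/q$ is asymptotic in $n$ \emph{for a fixed automaton}, whereas here $q$ grows with $n$ (the critical case is $q=n$), so no uniform error bound follows from irreducibility alone; moreover in the imprimitive case the entry oscillates with the period, and subdominant eigenvalues of modulus close to $2$ can keep $(M^n)_{\alpha\omega}$ small in precisely the tight regime. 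Your honest flagging of this as ``where I anticipate the difficulty to lie'' is accurate: what you have is a proof of one inequality plus a program for the other, which matches --- but does not advance past --- the conjectural status the statement has in the paper.
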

	\begin{rem}
		We have verified Conjecture \ref{rush} for binary strings of length up to 9.
	\end{rem}
	\begin{thm}
			(1) If $Q_s(x)>(n,\infty)$ then $Q_s(x)\ge (n+1,A_{\perm}(x))$.
			(2) We always have $Q_s(x)\le (A_{\perm}(x),A_{\perm}(x))$.
	\end{thm}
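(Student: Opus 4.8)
\n\textbf{Proposal.} Both clauses compare $Q_s$ with the permutation complexity $A_{\perm}$, and in each direction the bridge is the passage between a finite group acting in $\PU$ and a permutation automaton. The plan for (1) is to manufacture a permutation DFA out of a finite semi-classical witness. Assume $Q_s(x)>(n,\infty)$, so every witness has at least $n+1$ states; consider a witness in $\PU(n+1)$ given by a finite group $G$ with $U_0,U_1\in G$ (if the optimal witness at this dimension has $G$ infinite, its second coordinate is $\infty\ge A_{\perm}(x)$ and there is nothing to prove). Take the state set of the DFA to be $G$ itself, with initial state $1_G$, with the transition on a letter $s\in\{0,1\}$ equal to the right translation $g\mapsto gU_s$, and with final set $F=\{g\in G: ge_1=e_2\ \text{projectively}\}$. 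Right translation by a group element is a bijection of $G$, so both transitions are permutations; reading $y$ from $1_G$ ends at $U_{y(1)}\cdots U_{y(\abs{y})}=U_y$, so $y$ is accepted exactly when $U_ye_1=e_2$, that is, exactly when $y=x$ among words of length $\abs{x}$. This permutation DFA has $\abs{G}$ states, whence $A_{\perm}(x)\le\abs{G}$. Consequently no witness at dimension $n+1$ has finite group order below $A_{\perm}(x)$, and infinite-group witnesses contribute second coordinate $\infty$; either way $Q_s(x)\ge(n+1,A_{\perm}(x))$.

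For (2) the plan is the reverse passage: realise an optimal permutation DFA by permutation matrices. Let $M$ witness $A_{\perm}(x)$ with $q=A_{\perm}(x)$ states and transition permutations $\sigma_0,\sigma_1$. Since $M$ is deterministic and accepts only $x$ among words of length $\abs{x}$, the state $t$ that $x$ reaches is reached by no other word of that length; relabel the start state as $1$ and $t$ as $2$. Sending $\sigma_0,\sigma_1$ to their permutation matrices $U_0,U_1\in\U(q)\subseteq\PU(q)$ yields $U_xe_1=e_2$ while $U_ye_1=e_{\sigma_y(1)}\ne e_2$ for every other length-$\abs{x}$ word $y$, because distinct standard basis vectors are projectively distinct. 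This is a finite semi-classical witness in dimension $q=A_{\perm}(x)$, which pins the first coordinate and already gives $Q_s(x)\le(A_{\perm}(x),f)$ for a finite $f$.

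The step I expect to be the genuine obstacle is the \emph{second} coordinate. The group generated by the two permutation matrices is precisely the transition group $\langle\sigma_0,\sigma_1\rangle\le S_q$, whose order may exceed $q$: already for $x=01$ one must separate $01$ from $10$, forcing $\sigma_0,\sigma_1$ to be non-commuting and hence to generate all of $S_3$, of order $6>3=A_{\perm}(01)$. Thus the bare permutation-matrix witness does \emph{not} by itself deliver group order $\le A_{\perm}(x)$. To reach the pair $(A_{\perm}(x),A_{\perm}(x))$ I would split into cases: when the transition group acts regularly (order exactly $q$) the witness already has the claimed group order; otherwise the surplus non-commutativity should be routed through a faithful representation of the transition group of some smaller dimension $d<q=A_{\perm}(x)$, so that the witness's pair $(d,\cdot)$ lies below $(A_{\perm}(x),A_{\perm}(x))$ lexicographically irrespective of its group order. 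The delicate point, which I expect to demand the most care, is transporting the normalisation $e_1\mapsto e_2$ — the condition ``reach the single state $t$'' — faithfully into that smaller representation while keeping the images of the two generators inside a group of the required order.
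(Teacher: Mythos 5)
Your argument for (1) is exactly the paper's. The paper's proof consists of the remark that the Cayley graph of any finite group witnessing $Q_{sf}$ can be read as a witness for $A_{\perm}$, which is precisely your right-translation DFA on the state set $G$ with final set $\{g\in G: ge_1=e_2\}$; it gives $A_{\perm}(x)\le\abs{G}$ for every finite-group witness, and combined with the hypothesis that the first coordinate is forced up to $n+1$ this yields the stated lower bound. This part is correct and complete.

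For (2), your permutation-matrix construction is again exactly the paper's one-line proof (``restrict attention to the states $e_j^{(q)}$, refusing to use superposition''), and the obstacle you flag is genuine: that construction controls the dimension $q=A_{\perm}(x)$ and the finiteness of $\langle U_0,U_1\rangle$, but not the \emph{order} of that group, which is a subgroup of $S_q$ and can be far larger than $q$. Your $x=01$ example, where the transitions must fail to commute and hence generate a group of order $6>3$, is exactly on point. The paper offers no further argument for the second coordinate; as written its proof establishes only $Q_{sf}(x)\le A_{\perm}(x)$, i.e.\ $Q_s(x)\le(A_{\perm}(x),\abs{\langle\sigma_0,\sigma_1\rangle})$. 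Your proposed repair --- splitting on whether the transition group acts regularly, and otherwise rerouting through a lower-dimensional faithful representation while preserving the $e_1\mapsto e_2$ normalisation --- is left as a sketch and is not obviously completable (a smaller faithful representation need not send basis vectors to basis vectors, and the regular case need not occur). So you have not proved (2) as stated, but neither does the paper; you should either complete the repair or record that the construction proves the weaker bound with $\abs{\langle\sigma_0,\sigma_1\rangle}$ in the second coordinate.
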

	\begin{proof}
		For (1) we note that the quantum states can be considered as states, so that the Cayley graph of any group witnessing $Q_{sf}$ can be thought of as a witness for $A_{\perm}$.
		For (2) we note that we can restrict attention to only the states $e_j^{(q)}$, $1\le j\le q$, refusing to use superposition.
	\end{proof}
	Matrices $M$ of dimension $n\times n$ whose entries are 0 and 1, with exactly one 1 per column, act on $X=[n]=\{1,\dots,n\}$ by matrix multiplication in the following way:
	\[
		\varphi(M)(j)=k,\qquad\text{where } M e^{(n)}_j=e_k^{(n)}.
	\]
	If $M$ is additionally invertible then it thus induces an element of the symmetric group $S_n$ and belongs to $\Orth(n)$, the group of orthogonal matrices $M$ (satisfying $M^{-1}=M^T$).
	\begin{thm}
		For each string $x$, $Q_f(x)$ is finite.
	\end{thm}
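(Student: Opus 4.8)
The plan is to realize $x$ by a quantum DFA whose two transitions are \emph{permutation matrices}. Such matrices lie in $\Orth(q)\subseteq\U(q)$, so they are unitary and descend to $\PU(q)$; moreover any finite set of permutation matrices generates a subgroup of the group of $q\times q$ permutation matrices, which is isomorphic to a subgroup of $S_q$ and hence finite. Its image in $\PU(q)$ is therefore finite as well, so the finiteness requirement in the definition of $Q_f$ is satisfied automatically. Thus it suffices to exhibit, for each $x$ of length $n$, a \emph{permutation automaton} --- a choice of $\delta_0,\delta_1$ among the permutation matrices together with an initial and a final state --- that accepts $x$ and no other word of length $n$. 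Equivalently, combining $Q_f(x)\le Q_{sf}(x)$ (which holds because dropping the semi-classical constraint only enlarges the class of witnesses) with part (2) of the preceding theorem, giving $Q_{sf}(x)\le A_{\perm}(x)$, the whole statement reduces to the finiteness of $A_{\perm}(x)$.

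To produce such an automaton I would use the residual finiteness of the free group. Writing $F_2=\langle a,b\rangle$ and setting $a=\delta_0$, $b=\delta_1$, the $2^n$ positive words $\delta_y$ with $\abs{y}=n$ are pairwise distinct elements of $F_2$, since distinct positive words of equal length are reduced and represent distinct free-group elements. Residual finiteness then provides a homomorphism $\pi$ from $F_2$ onto a finite group $G$ that is injective on the finite list $\{\delta_y:\abs{y}=n\}$. Now let $G$ act on itself by left multiplication, so that each $\pi(\delta_b)$ is realized as a permutation of the finite set $X=G$. Taking $\alpha$ to be the identity and $\omega=\pi(\delta_x)$, the states $\delta_y\alpha=\pi(\delta_y)$ are pairwise distinct as $y$ ranges over words of length $n$, so $x$ is the unique length-$n$ word accepted. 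Since the transitions are permutation matrices, the group they generate is finite, and therefore $Q_f(x)\le\abs{G}<\infty$. After relabelling the standard basis so that $\alpha=e_1$ and $\omega=e_2$, the same witness even bounds $Q_{sf}(x)$, reproving the preceding theorem.

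The main obstacle is precisely this construction of the permutation automaton, not the passage from permutations to finite unitary groups, which is immediate. A direct attempt --- take the path automaton for $x$ on states $\{0,\dots,n\}$ and complete its two partial injections to genuine permutations --- runs into the difficulty that completing a partial injection to a permutation introduces new edges, and these may create a spurious path of length $n$ from $\alpha$ into $\omega$ labelled by some $y\ne x$. Controlling this requires padding with extra states, for instance long disjoint cycles, arranged so that any deviation from the $x$-path cannot return to $\omega$ within the remaining steps. Invoking residual finiteness sidesteps the bookkeeping entirely, since it delivers injectivity of $y\mapsto\delta_y\alpha$ in one stroke; the price is that one loses all control over $\abs{G}$, which is exactly why pinning down the sharp value in Conjecture \ref{rush} is a genuinely harder problem than mere finiteness.
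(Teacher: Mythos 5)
Your argument is correct, and its first half --- permutation matrices lie in $\Orth(q)\subseteq\U(q)$, generate a finite subgroup isomorphic to a subgroup of $S_q$, and hence descend to a finite subgroup of $\PU(q)$, so everything reduces to exhibiting a permutation automaton accepting $x$ uniquely among words of its length --- is exactly the paper's reduction. Where you genuinely diverge is in how that permutation automaton is produced. The paper does not construct it at all: it simply quotes the permutation automatic complexity and writes $Q_f(x)\le(A_{\perm}(x),A_{\perm}(x))\le(\abs{x}+1,\abs{x}+1)$, taking the upper bound on $A_{\perm}$ as known. You instead build a witness from scratch via residual finiteness of $F_2$: distinct positive words of length $n$ are distinct reduced words, a single finite quotient $G$ separates all $2^n$ of them, and the left regular representation of $G$ converts this separation into a permutation automaton on $\abs{G}$ states in which $y\mapsto\pi(\delta_y)$ is injective on length-$n$ words. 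That is a valid, self-contained proof of finiteness, and it even yields the stronger conclusion that all $2^n$ words land in distinct states. The trade-off is quantitative: since $G$ must have at least $2^n$ elements, your bound on the number of states and on the group order is exponential in $\abs{x}$, versus the paper's $\abs{x}+1$. Two small points to tidy: for the relabelling to $\alpha=e_1$, $\omega=e_2$ (and for the lock analogy's requirement that initial and final states differ) you should also ask $\pi$ to separate $\delta_x$ from the identity, which residual finiteness gives for free since $\delta_x$ is a nonempty positive word; and your claim to ``reprove'' part (2) of the preceding theorem is only a qualitative reproof, since that statement carries the sharper $A_{\perm}(x)$ bound.
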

	\begin{proof}
		By the embedding of $S_n$ into $\Orth(n)$ above, and then inclusion of $\Orth(n)$ into $\U(n)$ (simply because $U^{-1}=U^T$ for a real matrix $U$ implies $U^{-1}=U^{\dag}$),
		we have $Q_f(x)\le (A_{\perm}(x), A_{\perm}(x))\le (x+1, x+1)$.
	\end{proof}
	We also have $Q\le Q_f\le Q_{sf}$ and $Q\le Q_s\le Q_{sf}$. For our quantum lock analogy we want distinct initial and final states,
	whereas for automatic complexity $A(x)$ or $A_{\perm}(x)$ it is natural to not require that.

\section{Bounds on \texorpdfstring{$Q_s$}{Qs}}\label{generic}

	Arbitrary $q$-state DFA transition functions $\delta_0$, $\delta_1$ can be considered to belong to the matrix algebra $M_q$ of all $q\times q$ matrices.
	They are then exactly the matrices whose entries are 0 and 1, with exactly one 1 per column.
	And the nondeterministic case just corresponds to 0--1 valued matrices with \emph{not necessarily} exactly one 1 per column.
	Moving to arbitrary real matrices we can significantly reduce the required dimension, from $n/2+1$ \cite{Kjos-EJC} to 2, as we now explain.

	The following Theorem \ref{nov-24} indicates how any binary string can be encoded, in a sense, by two $2\times 2$ matrices.
	\begin{thm}\label{nov-24}
		For each binary string $x$ there exist $U_0,U_1\in \GL_2(\mathbb R)$
		such that
		$U_x e_1=e_2$ and for any $y\ne x$, $\abs{y}=\abs{x}$,
		$U_y e_1\ne e_2$.
	\end{thm}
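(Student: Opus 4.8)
The plan is to work in the projective line $\mathbb{RP}^1$ and to arrange $U_0,U_1$ as an iterated function system whose two pieces have disjoint images, so that the map $y\mapsto U_y e_1$ separates all words of a common length; a final projective adjustment will then force the single value $U_x e_1$ to land exactly on $e_2$. I coordinatize $\mathbb{RP}^1$ by slope, representing the line through $(1,t)^{T}$ by $t\in\mathbb{R}\cup\{\infty\}$, so that $[e_1]$ is $t=0$ and $[e_2]$ is $t=\infty$. A lower-triangular matrix $\left(\begin{smallmatrix}1&0\\ \beta&\alpha\end{smallmatrix}\right)$ acts on slopes by the affine map $t\mapsto\alpha t+\beta$, and matrix multiplication corresponds to composition of these maps, which is exactly the structure an IFS needs.

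First I would fix the affine contractions $f_0(t)=\tfrac13 t+\tfrac1{10}$ and $f_1(t)=\tfrac13 t+\tfrac23$, whose images of $X=[0,1]$ are the disjoint intervals $[\tfrac1{10},\tfrac{13}{30}]$ and $[\tfrac23,1]$, both avoiding $0$, and let $V_0,V_1\in\GL_2(\mathbb{R})$ be the corresponding lower-triangular matrices. The key step is the addressing lemma: writing $X_w=f_{w(1)}\circ\cdots\circ f_{w(n)}(X)$, the identity $X_w=f_{w(1)}(X_{w(2)\cdots w(n)})$ together with disjointness of the two first-level images gives, by induction on $n$, that the $2^{n}$ cylinders $X_w$ over length-$n$ words are pairwise disjoint. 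Since $V_w$ acts on slopes as $f_{w(1)}\circ\cdots\circ f_{w(n)}$ and the starting slope $0$ lies in $X$, the slope of $V_w e_1$ lies in $X_w$; hence $w\mapsto V_w e_1$ takes pairwise distinct (and, since every $X_w\subseteq[\tfrac1{10},1]$, finite and nonzero) slopes on words of length $n$. This separation is the whole engine of the proof, and I expect the bookkeeping of the composition order in this lemma to be the only genuinely delicate point.

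In particular the slope $p$ of $V_x e_1$ is finite, so $V_x e_1=\nu\,(1,p)^{T}$ for some $\nu\neq 0$. It remains to upgrade projective separation to the exact vector equality. I would conjugate by the upper-triangular $g=\left(\begin{smallmatrix}-p&1\\ 0&d\end{smallmatrix}\right)$, which fixes the direction $[e_1]$ and sends slope $p$ to $\infty$; setting $U_i=gV_ig^{-1}$ one computes $U_x e_1=gV_xg^{-1}e_1=-\nu d\,e_2$, so the choice $d=-1/\nu$ yields $U_x e_1=e_2$ on the nose. For $y\neq x$ the slope $q$ of $V_y e_1$ is finite with $q\neq p$, and $g$ sends it to the finite value $dq/(q-p)$; thus $U_y e_1$ has finite slope, i.e.\ $U_y e_1\notin\mathbb{R}e_2$, so $U_y e_1\neq e_2$. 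All matrices stay real and invertible (indeed $\det g=p/\nu\neq0$ since $p\ge\tfrac1{10}$), which finishes the construction.

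Finally I would record the trivial endpoints: for $\abs{x}=0$ there is nothing to separate, so one reads the statement for $\abs{x}\ge 1$; unary strings cause no trouble because commutativity of $U_0,U_1$ is never invoked, and the non-commutativity flagged in Definition~\ref{at-most} in fact holds automatically here since $\beta_0\neq\beta_1$. The only quantitative input I have used is the disjointness of the two contraction images and one finite nonzero slope value, so any strictly contracting affine IFS with separated images whose attractor misses $t=0$ would serve equally well.
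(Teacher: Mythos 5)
Your proof is correct. The paper explicitly omits its own proof of Theorem~\ref{nov-24}, so there is no argument of record to set yours against; judged on its own terms, your two-stage construction is sound. The addressing lemma is a standard ping-pong/cylinder argument and you verify exactly the facts it needs ($f_0(X)\cap f_1(X)=\varnothing$, $f_i(X)\subseteq X$, injectivity of each $f_i$), so the $2^n$ slopes $f_{w(1)}\circ\cdots\circ f_{w(n)}(0)$ are pairwise distinct and confined to $[\tfrac1{10},1]$, hence finite and nonzero. The conjugation step also checks out: $ge_1=-pe_1$ gives $U_xe_1=-\nu d\,e_2$, the choice $d=-1/\nu$ normalizes this to $e_2$ exactly, and for $y\ne x$ the first coordinate $q-p$ of $gV_ye_1$ is nonzero, so $U_ye_1\notin\mathbb{R}e_2$, which is stronger than the required $U_ye_1\ne e_2$; $\det g=p/\nu\ne 0$ keeps everything in $\GL_2(\mathbb R)$. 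It is worth noting that your second stage is precisely the device the paper does deploy later, in the proof of Theorem~\ref{maining}, where the change of basis $C=[v\mid E_{0011}v]/\sqrt{\det D}$ plays the role of your $g$; your first stage, by contrast, replaces the free-group and representation-theoretic inputs used elsewhere in the paper with an elementary disjoint-interval argument, which is well suited to the merely-$\GL_2(\mathbb R)$ statement here (unitarity is unavailable for this purpose anyway, by Theorem~\ref{tempura}). Your caveat that the statement must be read with $\abs{x}\ge 1$ is also right, since the empty product is the identity and cannot send $e_1$ to $e_2$; that is a defect of the statement as written, not of your argument.
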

	We omit the proof.
	\begin{thm}
		$Q(x)\le 2$ for all strings $x$.
	\end{thm}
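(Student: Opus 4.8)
The plan is to exploit the extra freedom in the definition of $Q$ (as opposed to $Q_s$): for $Q$ we may choose the initial and final states $\alpha,\omega\in\CP^{1}$ freely. I would therefore not pass through Theorem~\ref{nov-24}, but instead work intrinsically in $\PU(2)\cong\SO(3)$ and reduce the whole statement to avoiding a finite set of points on the Riemann sphere.

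First I would choose $U_0,U_1\in\PU(2)$ generating a free group of rank $2$. Such pairs exist because $\SO(3)$ contains free subgroups of rank $2$ (the classical fact underlying the Banach--Tarski paradox). For such a pair the assignment $y\mapsto U_y$ is injective on words: if $y,y'\in\{0,1\}^{n}$ and $U_y=U_{y'}$, then the positive (hence reduced) words $U_y$ and $U_{y'}$ in the free group coincide, forcing $y=y'$. In particular $U_x^{-1}U_y$ is a nontrivial element of $\PU(2)$ for every $y\ne x$ with $\abs{y}=\abs{x}$.

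Second, I would use that a nontrivial element of $\PU(2)$ fixes exactly two points of $\CP^{1}$ (its two eigenlines, which are distinct precisely because the element is nonscalar). Hence each $U_x^{-1}U_y$ with $y\ne x$ has exactly two fixed points, so the set
\[
	B=\bigcup_{\substack{y\ne x\\ \abs{y}=\abs{x}}}\{\,p\in\CP^{1}: U_x^{-1}U_y\,p=p\,\}
\]
is finite. Since $\CP^{1}$ is infinite I may pick $\alpha\in\CP^{1}\setminus B$ and set $\omega=U_x\alpha$. Then $U_x\alpha=\omega$, while for each $y\ne x$ of the same length $U_y\alpha=U_x(U_x^{-1}U_y\alpha)\ne U_x\alpha=\omega$ because $\alpha$ is not fixed by $U_x^{-1}U_y$. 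Thus the quantum DFA with states $\alpha,\omega$ and transitions $U_0,U_1$ accepts exactly $x$ among words of length $\abs{x}$, giving $Q(x)\le 2$.

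The cases $\abs{x}\le 1$ and unary $x=0^{n}$ are degenerate (there is at most one competing word, or none), so the content is entirely in the generic case. The one step that carries real weight is the first: securing injectivity of $y\mapsto U_y$. The free-group fact settles it cleanly, but if one prefers to avoid quoting it, the same conclusion follows from a genericity argument, since for each pair $y\ne y'$ the locus $\{(U_0,U_1)\in\PU(2)^2: U_y=U_{y'}\}$ is a proper closed subvariety and finitely many of these cannot cover $\PU(2)^2$; that each such locus is \emph{proper} is again exactly what the existence of a free pair guarantees. After that, everything reduces to the elementary ``two fixed points'' count, so I do not expect further obstacles.
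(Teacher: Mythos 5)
Your proposal is correct and follows essentially the same route as the paper: the paper's proof also reduces the statement to the existence of a free pair in $\U(2)$ (citing Hausdorff and the generic-pair fact), leaving the rest implicit. Your write-up simply supplies the details the paper omits --- injectivity of $y\mapsto U_y$ for positive words in a free group, the two-fixed-point count for nonscalar elements of $\PU(2)$, and the choice of $\alpha$ outside the finite bad set --- all of which are sound.
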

	\begin{proof}
		It suffices to show that there is a free group generated by two unitary matrices.
		It is well-known
		\cite{MR3043070}
		that a generic pair of unitaries in U(2) generates a free group.
		Indeed, the existence of free subgroups of $\SO(3)$ (and hence its double cover
		$\SU(2)$) was already known to F. Hausdorff \cite{MR1511802}; see also \cite{MR0096732} and explicit examples in \cite{MR1284564}.
	\end{proof}
	Unfortunately, perhaps, free groups are incompatible with the ``semi-classical'' $e_1\mapsto e_2$ property in the following way:
	\begin{thm}\label{tempura}
		There is no word $x$ of length $>0$ and pair of unitary matrices $U_0$, $U_1$ such that
		$U_0$ and $U_1$ generate a free group and $U_xe_1=e_2$ in projective space.
	\end{thm}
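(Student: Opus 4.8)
The plan is to show that in dimension~$2$ --- the setting of the preceding theorem $Q(x)\le 2$, where the witnessing free group sits inside $\U(2)$ --- the semi-classical constraint $U_x e_1 = e_2$ is rigid enough to force $U_x$ to be a torsion element of $\langle U_0,U_1\rangle$, which is impossible since free groups are torsion-free. First I would record that a nonempty word $x$, being a \emph{positive} word in the generators $U_0,U_1$ (no inverses ever occur in $U_x=\prod_k U_{x(k)}$), is automatically reduced, hence represents a nontrivial element of the free group; thus $U_x\neq I$. This already uses freeness, since in a non-free group a nonempty positive word could collapse to the identity.

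Next I would extract the shape of $U_x$ from the projective condition. In $\mathbb C^2$ the vector $U_x e_1$ is the first column of $U_x$, and $U_x e_1 = e_2$ in projective space says this column is a scalar multiple of $e_2$, i.e.\ equals $(0,c)^{T}$ with $\abs{c}=1$ by unitarity. Orthonormality of the columns of a unitary matrix then forces the second column to be $(b,0)^{T}$ with $\abs{b}=1$, so $U_x=\begin{pmatrix} 0 & b\\ c & 0\end{pmatrix}$ is anti-diagonal. A one-line computation gives $U_x^2 = bc\,I$, a scalar matrix.

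The final step is the contradiction. Passing to $\PU(2)$, the relation $U_x^2=bc\,I$ becomes $[U_x]^2=[I]$, so the projective image of $U_x$ has order dividing $2$; together with $U_x\neq I$ this exhibits a nontrivial torsion element, impossible in a torsion-free group. If one instead insists that the free group live in $\U(2)$ rather than $\PU(2)$, I would note that $bc\,I = U_x^2$ lies in $\langle U_0,U_1\rangle$ and, being scalar, commutes with $U_0$ and $U_1$; since a nonabelian free group has trivial center, this forces $bc=1$, whence $U_x^2=I$ and the same torsion contradiction applies.

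The main obstacle --- really the only subtle point --- is the bookkeeping at the scalar level: converting ``$U_x$ sends the \emph{line} $[e_1]$ to the \emph{line} $[e_2]$'' into an honest group relation rather than an equality only up to phase. The anti-diagonal normal form disposes of the geometric half, and triviality of the center of a nonabelian free group (or simply working projectively from the outset) absorbs the remaining phase ambiguity; everything else is forced. I would also flag that ``free group'' must here mean of rank $\ge 2$, since a single anti-diagonal unitary with an irrational phase generates an infinite cyclic (hence free) subgroup of $\U(2)$ and maps $[e_1]$ to $[e_2]$ --- exactly the degenerate case the center argument rules out.
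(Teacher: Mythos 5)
The paper states Theorem \ref{tempura} with no proof at all (unlike Theorem \ref{nov-24}, it is not even followed by ``we omit the proof''), so there is nothing to compare your argument against; what you have written actually supplies the missing argument, and it is correct. The chain you set up --- the projective condition plus unitarity forces $U_x$ to be anti-diagonal, hence $U_x^2$ is scalar, hence $[U_x]$ is a nontrivial torsion element of $\PU(2)$ (or, working in $\U(2)$, $U_x^2$ is a nontrivial central element of a rank-$2$ free group) --- is exactly the right mechanism, and both caveats you flag are genuine. The rank caveat is real: a single anti-diagonal unitary whose phase is not a root of unity generates an infinite cyclic, hence free, subgroup of $\U(2)$ and sends $[e_1]$ to $[e_2]$ with $|x|=1$, so ``free'' must mean freely generated of rank $2$. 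The dimension caveat is not merely a contextual convenience but a necessity: the statement fails in $\U(3)$, since one can take a block-diagonal free pair $A_0\oplus(\lambda)$, $B_0\oplus(\mu)$ with the three eigenvalues of $A=A_0\oplus(\lambda)$ arranged so that $0$ lies in their convex hull (hence in the numerical range of $A$), choose a unit vector $u$ with $\langle u,Au\rangle=0$, and conjugate by a unitary carrying $u$ to $e_1$ and $Au$ to $e_2$; the conjugated pair is still free of rank $2$ and its first generator maps $e_1$ to $e_2$ exactly. So the anti-diagonal normal form, which is what produces the scalar square, is an intrinsically two-dimensional phenomenon, and your proof correctly isolates it. The one phrase I would tighten is ``exactly the degenerate case the center argument rules out'': the center argument does not rule out the cyclic case, it simply does not apply to it, which is precisely why the nonabelian hypothesis has to be imposed rather than derived.
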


\section{Unboundedness of \texorpdfstring{$Q_f$}{Qf}}
	As usual we denote by $H\trianglelefteq G$ that $H$ is a normal subgroup $G$, and by $[G:H]$ the index of $H$ in $G$.
	\begin{thm}[Jordan--Schur]\label{js}
		There is a function $f(n)$ such that given a finite group $G$ that is a subgroup of $M_n(\mathbb C)$, 
		there is an abelian subgroup $H\trianglelefteq G$ such that $[G:H]\le f(n)$.
	\end{thm}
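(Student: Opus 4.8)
The plan is to prove the classical Jordan--Schur lemma in its standard analytic form. The engine will be a commutator contraction estimate on a neighborhood of the identity in the unitary group, combined with a compactness (packing) bound; the only genuinely hard part is upgrading ``centrality of one element'' to ``abelian.''

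First I would reduce to a finite subgroup of $\U(n)$. A finite multiplicative subgroup $G\le M_n(\mathbb C)$ has an identity element $e$ that is idempotent; restricting to the corner $eM_n(\mathbb C)e\cong M_m(\mathbb C)$ with $m\le n$ makes every element of $G$ invertible, so $G\le\GL_m(\mathbb C)$. Averaging an arbitrary Hermitian inner product over the group, $\langle u,v\rangle_G=\sum_{g\in G}\langle gu,gv\rangle$, produces a $G$-invariant inner product; after an orthonormal change of basis $G\le\U(m)\le\U(n)$. Henceforth $G$ is a finite subgroup of the compact group $\U(n)$, carrying the bi-invariant metric $d(g,h)=\|g-h\|=\|g^{-1}h-I\|$ coming from the operator norm.

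Next, fix a small absolute constant $\eps<1/2$ and set $H=\langle\,g\in G:\|g-I\|<\eps\,\rangle$. Two soft facts come for free. (i) $H\trianglelefteq G$: since $\|ugu^{-1}-I\|=\|u(g-I)u^{-1}\|=\|g-I\|$ for unitary $u$, conjugation by $G$ permutes the generating set, hence normalizes $H$. (ii) $[G:H]\le f(n)$: if $g_1H\ne g_2H$ then $g_2^{-1}g_1\notin H$, so $\|g_1-g_2\|=\|g_2^{-1}g_1-I\|\ge\eps$; thus any set of coset representatives is $\eps$-separated in $\U(n)$, and by compactness the cardinality of an $\eps$-separated subset of $\U(n)$ is bounded by a packing number $f(n)$ depending only on $n$ and the fixed $\eps$.

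The heart of the proof, and the step I expect to be the main obstacle, is showing $H$ is abelian. The engine is the contraction estimate $\|[A,B]-I\|\le 2\|A-I\|\,\|B-I\|$, valid for $A,B$ near $I$; with $\eps<1/2$ this shows the commutator of two generators is again a generator lying strictly closer to $I$. I would then choose a nonidentity $A\in G$ minimizing $\|A-I\|$ (assuming $H\ne\{I\}$, so this minimum is $<\eps$): for every generator $B$ the estimate forces $\|[A,B]-I\|<\|A-I\|$, and since $[A,B]\in G$, minimality gives $[A,B]=I$, so $A$ is central in $H$. To pass from ``a central element'' to ``abelian,'' I would induct on $n$ using the eigenspace decomposition of $A$. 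If $A$ is non-scalar its eigenspaces are proper and, being $H$-invariant because $H$ centralizes $A$, embed $H$ into a product of unitary groups of spaces of dimension $<n$, to which the inductive hypothesis applies; if $A$ is scalar I would pass to the image in $\PU(n)$, where the minimal element is no longer a central scalar, and induct there. The delicate points—where care is genuinely required—are pinning $\eps$ so that it is truly absolute (independent of $G$), making the commutator estimate rigorous on the chosen neighborhood, and tracking the index bounds through the induction so that the final abelian normal subgroup still has index bounded by a single function $f(n)$ of $n$ alone.
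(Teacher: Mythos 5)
The paper offers no proof of Theorem~\ref{js}; it invokes the Jordan--Schur lemma as a classical fact, so your argument can only be measured against the standard literature proof, which is essentially the route you take. The first two thirds of your outline are sound: the reduction to $\U(m)$ via the idempotent corner and averaging, the normality of $H=\langle g:\|g-I\|<\eps\rangle$ from unitary invariance of the metric, the packing bound on $[G:H]$, the identity $[A,B]-I=\bigl((A-I)(B-I)-(B-I)(A-I)\bigr)A^{-1}B^{-1}$ giving $\|[A,B]-I\|\le 2\|A-I\|\,\|B-I\|$, and the minimality trick showing that the nonidentity element of $G$ closest to $I$ centralizes $H$.

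The genuine gap is in the last step, exactly where you predicted trouble. In the scalar case your plan to ``pass to the image in $\PU(n)$ and induct there'' does not type-check against an induction on $n$: $\PU(n)$ is not a unitary group of smaller degree, and its natural unitary embedding (the adjoint representation) has degree $n^2-1>n$. Worse, an element of $H$ that centralizes a lift $A$ only modulo scalars merely \emph{permutes} the eigenspaces of $A$ rather than preserving them, so what you obtain is a system of imprimitivity, not a reduction of dimension --- this is precisely the primitive/imprimitive dichotomy that makes the classical proof of Jordan's theorem nontrivial. A cleaner patch: choose $A$ minimizing $\|A-I\|$ among \emph{non-scalar} elements with $\|A-I\|<\eps$ (if none exist, $H$ consists of scalars and is abelian); minimality forces $[A,B]=\lambda I$ for each generator $B$, and taking traces in $ABA^{-1}=\lambda B$ gives $\operatorname{tr}(B)=\lambda\operatorname{tr}(B)$ with $\operatorname{tr}(B)\neq 0$ because every eigenvalue of $B$ lies within $\eps<1$ of $1$; hence $\lambda=1$, $A$ is a non-scalar central element, and the eigenspace induction applies. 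Separately, the abelian subgroup produced by that induction is normal in $H$ but not obviously in $G$, since the eigenspace decomposition of $A$ need not be $G$-stable; you must replace it by its normal core in $G$, which is still abelian and has index at most the factorial of the previous index, so a bound $f(n)$ survives --- but this step needs to be stated.
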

	\begin{cor}\label{refute}
		For each $n$ there exists an $m$ such that for any $u,v\in U(n)$ which generate a finite group, we have $[u^m,v^m]=1$, i.e., $u^mv^m=v^mu^m$.
	\end{cor}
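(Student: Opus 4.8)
The plan is to extract a uniform exponent $m$ directly from the Jordan--Schur function $f(n)$ of Theorem \ref{js}, and then exploit abelianness of the bounded-index subgroup. The crucial point is that although the finite group $G=\langle u,v\rangle$ may be arbitrarily large, the \emph{index} of its abelian normal subgroup is bounded by $f(n)$ independently of $G$; this is precisely what allows $m$ to depend on $n$ alone.

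First I would fix $m=f(n)!$ (a tighter choice would be $m=\operatorname{lcm}(1,2,\dots,f(n))$). Now take any $u,v\in\U(n)$ generating a finite group $G$. Since a finite subgroup of $\U(n)$ is in particular a finite subgroup of $M_n(\mathbb C)$, Theorem \ref{js} applies and furnishes an abelian subgroup $H\trianglelefteq G$ with $[G:H]\le f(n)$.

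Next I would pass to the quotient $G/H$, a finite group of order $\abs{G/H}=[G:H]\le f(n)$. For any $g\in G$ the order of the coset $gH$ divides $\abs{G/H}$, which in turn divides $f(n)!=m$ because $\abs{G/H}\le f(n)$. Hence $(gH)^m=g^mH=H$, that is, $g^m\in H$ for every $g\in G$. Applying this to $g=u$ and $g=v$ yields $u^m,v^m\in H$, and since $H$ is abelian we obtain $u^mv^m=v^mu^m$, i.e.\ $[u^m,v^m]=1$, as required.

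The only genuine subtlety---the step I would watch most carefully---is guaranteeing that $m$ depends on $n$ alone and not on the particular group $G$. This rests entirely on the uniformity built into the Jordan--Schur lemma: the single bound $f(n)$ controls $[G:H]$ simultaneously for \emph{all} finite $G\le M_n(\mathbb C)$, so the one exponent $m=f(n)!$ works across every such $G$. Beyond this observation no further group-theoretic machinery is needed; the corollary is essentially a packaging of Theorem \ref{js} together with the elementary fact that a power that kills the quotient lands every generator in the abelian kernel.
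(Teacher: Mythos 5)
Your proof is correct and follows essentially the same route as the paper: apply Jordan--Schur to obtain an abelian normal subgroup $H\trianglelefteq G$ of index at most $f(n)$, then raise the generators to a power that kills the quotient $G/H$ so that $u^m,v^m$ land in the abelian subgroup and commute. If anything you are more careful than the paper's own proof, which takes $m=[G:H]$ (a quantity depending on the particular group $G$) and leaves implicit the passage to a single uniform exponent such as your $m=f(n)!$.
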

	\begin{proof}[Proof of Corollary from Theorem.]
		If $G$ is a finite group generated by $u$ and $v$, and $H$ a normal abelian subgroup of index $[G:H]=m$, then $u^mH=H$ and $v^mH=H$ (since any group element raised to the order of the group is the identity) and so $u^m$ and $v^m$ belong to $H$, hence, $H$ being abelian, they commute.
	\end{proof}
	\begin{thm}
		For each $n$ there is a binary string $x$ with $Q_f(x)>n$.
	\end{thm}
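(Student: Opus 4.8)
The plan is to use Corollary~\ref{refute}, which encodes the Jordan--Schur lemma's main consequence: for each dimension $n$ there is an exponent $m=m(n)$ such that $u^m$ and $v^m$ commute whenever $u,v\in\U(n)$ generate a finite group.  The idea is to build a string $x$ whose acceptance forces $U_0$ and $U_1$ to interact in a genuinely non-commuting way at the level of the specific powers $m$ dictated by this corollary.  Fixing $n$, let $m=m(n)$ be the exponent from the corollary.  First I would choose the candidate witness string to be $x=0^m1^m$ (or a close variant), so that $U_x e_1 = U_0^m U_1^m e_1 = e_2$ in projective space; the point is that the ``wrong'' string $y=1^m0^m$ has the same length and the same multiset of symbols, and acceptance of $x$ but not $y$ should be incompatible with the commutation relation $U_0^m U_1^m = U_1^m U_0^m$ forced by finiteness.

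\medskip

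\noindent
The key steps, in order, are as follows.  First, suppose toward a contradiction that $Q_f(x)\le n$ for the chosen $x=0^m1^m$; then there is a quantum DFA on $\CP^n$ with $\delta_0=U_0$, $\delta_1=U_1\in\PU(n)$ generating a \emph{finite} subgroup and accepting exactly $x$ among strings of length $2m$.  Second, lift $U_0,U_1$ to $\U(n)$ and apply Corollary~\ref{refute} to obtain $U_0^m U_1^m = U_1^m U_0^m$ up to phase, i.e.\ equality in $\PU(n)$.  Third, observe that $U_x\alpha = U_0^m U_1^m\alpha$ and $U_y\alpha = U_1^m U_0^m\alpha$ for $y=1^m0^m$ are therefore equal as projective points, so $M$ accepts $x$ iff it accepts $y$; since $y\ne x$ and $\abs{y}=\abs{x}$, this contradicts the requirement that $x$ be the unique accepted string of its length.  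Hence no such automaton exists and $Q_f(x)>n$, as desired.

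\medskip

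\noindent
The main obstacle I anticipate is the interaction between the projective setting and the exponent supplied by the corollary.  Corollary~\ref{refute} is stated for $u,v\in\U(n)$, but the automaton's transitions live in $\PU(n)$; I must be careful that lifting to $\U(n)$ and then applying the corollary yields a commutation that descends correctly to a genuine equality of projective points $U_0^mU_1^m\alpha$ and $U_1^mU_0^m\alpha$, with the overall scalar phase being irrelevant precisely because the acceptance condition $\delta_x\alpha=\omega$ is itself projective.  A secondary subtlety is ensuring the two comparison strings $0^m1^m$ and $1^m0^m$ really are distinct and of equal length, which holds automatically since $m\ge 1$; and noting that the argument rules out \emph{every} candidate automaton of dimension $\le n$, not merely one, so that we genuinely obtain the lower bound $Q_f(x)>n$ rather than a statement about a single machine.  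If the symmetric choice $y=1^m0^m$ should coincide with some degenerate case, one can instead compare $x$ against the reversal or permute the block structure, but the commutation relation from Jordan--Schur is the load-bearing ingredient in every version.
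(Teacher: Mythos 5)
Your proposal is correct and follows essentially the same route as the paper: both take $x=0^m1^m$ with $m$ from Corollary~\ref{refute}, lift the transitions from $\PU(n)$ to $\U(n)$, and use the forced commutation $U_0^mU_1^m=U_1^mU_0^m$ to conclude $\delta_{0^m1^m}=\delta_{1^m0^m}$, so $1^m0^m$ is also accepted. Your extra care about the projective phase and about ruling out every candidate automaton is sound but does not change the argument.
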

	\begin{proof}
		Let $x=0^m1^m$ where $m$ is as in Corollary \ref{refute}. Given $\delta_0, \delta_1\in \PU(n)=U(n)/U(1)$, choose $x$ and $y$ in $\U(n)$ such that
		$\delta_0=xU(1)$ and $\delta_1=yU(1)$. Then $\delta_{0^m1^m}=(xU(1))^m(yU(1))^m=x^my^mU(1)=y^mx^mU(1)=\delta_{1^m0^m}$.
	\end{proof}

	The extent to which $2\times 2$ matrices suffice for quantum automatic complexity is indicated in Table \ref{tempurum}.
	\begin{table}
	\centering
	\begin{tabular}{|c|c|c|}
		\hline
		& $e_1\mapsto e_2$ required & not required \\
		\hline
		finite group required & $\infty$ 
		& $\infty$ \\ 
		not required  & unknown 
		& $2$\\ 
		\hline
	\end{tabular}
	\caption{Supremum of quantum automatic complexity over all strings. In the case where $e_1\mapsto e_2$ is required (semi-classical quantum automatic complexity $Q_s$) but finiteness ($Q_f$) is not,
	we at least know that free groups cannot answer the question, by Theorem \ref{tempura}.}\label{tempurum}
	\end{table}
	\begin{thm}\label{gniniam}
		$Q_{sf}(0^{60}1^{60})>2$.
	\end{thm}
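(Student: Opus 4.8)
The plan is to argue by contradiction. Suppose $Q_{sf}(0^{60}1^{60})\le 2$; then there are $U_0,U_1\in\PU(2)$ generating a finite group $G$, with $U_{0^{60}1^{60}}e_1=e_2$ in $\CP^1$ and $U_ye_1\ne e_2$ for every other $y$ of length $120$. The string $1^{60}0^{60}$ is such a $y$, so it would have to satisfy $U_{1^{60}0^{60}}e_1\ne e_2$. I would derive a contradiction by showing that, because $G$ is finite, the elements $U_0^{60}$ and $U_1^{60}$ necessarily commute, whence $U_{0^{60}1^{60}}=U_0^{60}U_1^{60}=U_1^{60}U_0^{60}=U_{1^{60}0^{60}}$ act identically on $\CP^1$, forcing $U_{1^{60}0^{60}}e_1=e_2$ as well.

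The crux is therefore the commutation claim, and for this I would invoke the isomorphism $\PU(2)\cong\SO(3)$ together with the classical classification of finite subgroups of $\SO(3)$: every such subgroup is cyclic $C_n$, dihedral $D_n$, or one of the exceptional groups $A_4$, $S_4$, $A_5$. I would then verify case by case that raising to the $60$th power forces commutation. For the exceptional groups the exponents are $6$, $12$, $30$ respectively, each dividing $60$, so $g^{60}=1$ for every element and both $U_0^{60},U_1^{60}$ are trivial. For $C_n$ the group is abelian, so $U_0$, $U_1$---and hence all their powers---commute outright. The dihedral case is the one needing slight care: the only elements of $D_n\subset\SO(3)$ of order exceeding $2$ are rotations about the single principal axis, while every flip has order $2$; since $60$ is even, each $U_i^{60}$ is therefore either the identity or a rotation about the common principal axis, and all such rotations commute.

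The number $60$ is tuned precisely to this classification---it is a common multiple of the exponents of $A_4$, $S_4$, $A_5$ and is even (which kills the dihedral flips)---which is why the string $0^{60}1^{60}$ serves as the witness. I expect the main obstacle to be organizational rather than conceptual: handling the two infinite families $C_n$ and $D_n$, where individual generators may have arbitrarily large order, and confirming that in those cases the $60$th powers still land in a common abelian subgroup instead of generating something noncommutative. Once commutation is established the contradiction with $U_{1^{60}0^{60}}e_1\ne e_2$ is immediate, giving $Q_{sf}(0^{60}1^{60})>2$. As an alternative route one could instead apply Corollary \ref{refute} with $n=2$, provided one checks that $m=60$ is an admissible Jordan--Schur index for $\PU(2)$; the direct classification argument above seems cleaner and more self-contained.
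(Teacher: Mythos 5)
Your proof is correct, but it takes a genuinely different route from the paper's. The paper reduces to the case of a \emph{primitive} finite subgroup of $\U(2)$ and then cites Collins's sharp version of Jordan's theorem, which gives $m=60$ as the optimal index of an abelian normal subgroup in dimension $2$; combined with Corollary \ref{refute}, this forces $U_0^{60}$ and $U_1^{60}$ to commute. You instead work directly in $\PU(2)\cong\SO(3)$ and run through the classical classification of its finite subgroups ($C_n$, $D_n$, $\Alt(4)$, $S_4$, $\Alt(5)$), checking in each case that sixtieth powers land in a common abelian subgroup: the exceptional groups have exponents $6$, $12$, $30$ dividing $60$; the cyclic case is abelian outright; and in the dihedral case evenness of $60$ kills the flips, leaving only rotations about the principal axis. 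Your case analysis is complete and correct (including the degenerate dihedral groups $D_1$, $D_2$, which are abelian), and it is arguably more self-contained and more transparent than the paper's proof, which leans on an external sharp constant and on a primitivity reduction that is only briefly justified. What the paper's approach buys in exchange is generality: the Jordan--Schur route produces, for every dimension $n$, some $m$ with $Q_{f}(0^m1^m)>n$, whereas the $\SO(3)$ classification is special to $n=2$. Your closing observation that one could alternatively invoke Corollary \ref{refute} with the admissible value $m=60$ is in fact exactly the paper's strategy.
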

	\begin{proof}
		Note that we may assume our finite subgroups are primitive as there is no point in having a separate automaton disconnected from the witnessing one.
		Collins \cite{MR2334748} then shows that for $n=2$, the optimal value is $m=60$.
	\end{proof}
	On the other hand, we show below in Theorem \ref{maining} that $Q_{sf}(0^21^2)=2$, leaving a gap $(2,60)$ for the least $n$ such that $Q_{sf}(0^n1^n)>2$.
	The state of our knowledge of finiteness of quantum automatic complexity is given in Table \ref{tempurum}.
\section{Calculating \texorpdfstring{$Q_{s}(0011)\le (2,12)$}{Qs(0011)<=(2,12)}}\label{0011}
	The group $\mathrm{SU}(2)$ is the group of unit quaternions with the matrix representation \cite{qchu}
	\[
		\mathbf 1=\begin{bmatrix}1&0\\ 0&1\end{bmatrix},\qquad \mathbf i = \begin{bmatrix}i&0\\0&-i\end{bmatrix},\qquad \mathbf j = \begin{bmatrix}0&1\\-1&0\end{bmatrix},\qquad
		\mathbf k = \begin{bmatrix}0&i\\i&0 \end{bmatrix}
	\]
	where $i$ is the imaginary unit.
	We shall consider its order 24 subgroup the binary tetrahedral group
	\[
		\left\{\pm \mathbf 1, \pm \mathbf i, \pm \mathbf j, \pm \mathbf k, \frac12(\pm \mathbf 1\pm \mathbf i\pm \mathbf j\pm \mathbf k)\right\},
	\]
	also known by isomorphism as $\SL(2,3)$. Moreover we shall consider the order 12 quotient $\PSL(2,3)$ which is isomorphic to the alternating group $\Alt(4)$.
	\begin{thm}\label{warm-up}
		There exist $\mathbf a, \mathbf b\in\mathrm{SU}(2)$ such that
		\[
			\mathbf a\mathbf a\mathbf b\mathbf b\not\in\{\mathbf a,\mathbf b\}^4\setminus\{\mathbf a\mathbf a\mathbf b\mathbf b\}.
		\]
	\end{thm}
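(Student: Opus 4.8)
The plan is to settle existence by a finite, exact computation inside the binary tetrahedral group $G=\SL(2,3)\subset\SU(2)$ introduced above, where every product is a unit quaternion and hence can be evaluated by the rules $\mathbf i^2=\mathbf j^2=\mathbf k^2=-\mathbf 1$, $\mathbf i\mathbf j=\mathbf k$, and so on. The statement is precisely that, among the $16$ positive words of length $4$ in $\{\mathbf a,\mathbf b\}$, the word $\mathbf a\mathbf a\mathbf b\mathbf b$ equals none of the other $15$. First I would record the obstruction that kills the naive attempts: if $\mathbf b$ has order $4$ then $\mathbf b^2=-\mathbf 1$ is central, so $\mathbf a\mathbf a\mathbf b\mathbf b=\mathbf a^2\mathbf b^2=\mathbf a\mathbf b^2\mathbf a=\mathbf a\mathbf b\mathbf b\mathbf a$, an immediate collision with the word $0110$; the symmetric remark rules out $\mathbf a$ of order $4$. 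Hence both generators must be chosen of order $6$, i.e.\ among the twelve half-integer quaternions $\tfrac12(\pm\mathbf 1\pm\mathbf i\pm\mathbf j\pm\mathbf k)$, whose squares are non-central.

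I would then fix an explicit pair, for instance $\mathbf a=\tfrac12(\mathbf 1+\mathbf i+\mathbf j+\mathbf k)$ and $\mathbf b=\tfrac12(\mathbf 1+\mathbf i-\mathbf j-\mathbf k)$, each satisfying $\mathbf a^3=\mathbf b^3=-\mathbf 1$, and compute the $16$ quaternions. Here $\mathbf a\mathbf a\mathbf b\mathbf b=\tfrac12(\mathbf 1-\mathbf i+\mathbf j-\mathbf k)$, while for example $0000\mapsto-\mathbf a$, $1111\mapsto-\mathbf b$, $0110\mapsto\mathbf b$, $1100\mapsto\tfrac12(\mathbf 1-\mathbf i-\mathbf j+\mathbf k)$ and $0101\mapsto\tfrac12(-\mathbf 1+\mathbf i+\mathbf j-\mathbf k)$ are all distinct from it; the remaining words are checked the same way. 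This is a bounded, purely mechanical verification in a group of order $24$.

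As a conceptual alternative that avoids producing a specific pair, I would argue by genericity. The $16$ words are pairwise distinct reduced words in the free group $F_2$, so for each $w\neq\mathbf a\mathbf a\mathbf b\mathbf b$ the element $w\,(\mathbf a\mathbf a\mathbf b\mathbf b)^{-1}$ is a nontrivial reduced word; since $\SU(2)$ contains a free group of rank two (Hausdorff, as already invoked for $Q(x)\le 2$), the associated word map $\SU(2)^2\to\SU(2)$ is not identically $\mathbf 1$. Being real-analytic on the connected manifold $\SU(2)^2$, its equality locus with $\mathbf 1$ is a proper analytic subset and hence null, and the union of the $15$ such loci is still null; any pair in the full-measure complement witnesses the theorem.

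The real difficulty in the explicit route is not any single product but guaranteeing that one order-$6$ pair simultaneously avoids all $15$ potential coincidences—the central-square phenomenon shows these genuinely occur—so the pair must be chosen with care, or found by the finite search that the verification amounts to. The genericity argument removes this difficulty at the price of generators that span an infinite (generically free) group rather than the order-$24$ group $G$; keeping the witnesses inside $G$, which is what the subsequent order-$12$ bound requires, is exactly what the explicit computation secures.
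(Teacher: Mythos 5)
Your explicit route is essentially the paper's own proof: the paper likewise fixes a specific pair of order-$6$ elements of the binary tetrahedral group, namely $\mathbf a=\tfrac12(\mathbf 1+\mathbf i+\mathbf j-\mathbf k)$ and $\mathbf b=\tfrac12(\mathbf 1+\mathbf i+\mathbf j+\mathbf k)$, and disposes of the other fifteen words by a finite quaternion computation (there $\mathbf a\mathbf a\mathbf b\mathbf b=-\mathbf j$). Your alternative pair does work for the statement as given: completing your table, the sixteen words take the values $-\mathbf a$, $-\mathbf b$, $\tfrac12(-\mathbf 1+\mathbf i+\mathbf j-\mathbf k)$ and $\tfrac12(-\mathbf 1+\mathbf i-\mathbf j+\mathbf k)$ three times each, and $\mathbf a$, $\mathbf b$, $\tfrac12(\mathbf 1-\mathbf i-\mathbf j+\mathbf k)$ and $\tfrac12(\mathbf 1-\mathbf i+\mathbf j-\mathbf k)$ once each, the last of these being $\mathbf a\mathbf a\mathbf b\mathbf b$. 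Two caveats. First, with your pair $\mathbf a\mathbf b\mathbf a\mathbf a=-\,\mathbf a\mathbf a\mathbf b\mathbf b$, so uniqueness is destroyed on passage to $\PU(2)$; the paper's pair survives that quotient, which is exactly what Theorems \ref{faith} and \ref{maining} require, so your witness could not be reused downstream. Second, your narrowing-down step contains small inaccuracies: there are sixteen, not twelve, half-integer unit quaternions, eight of order $6$ and eight of order $3$, and order-$3$ generators also escape the central-square obstruction, so order $6$ is not forced (this does not affect the validity of the pair you then choose). Your genericity argument is correct and is a genuinely different proof---in effect the free-subgroup observation the paper already uses to get $Q(x)\le 2$, and it handles every target word at once---but, as you note yourself, it surrenders precisely the finite group that the surrounding results on $Q_{sf}$ and the order-$12$ bound are after.
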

	\begin{proof} 
		It turns out we can use the binary tetrahedral group to realize 0011 within $\mathrm{SU}(2)$. Namely, let
		\[
			\mathbf a=\delta_0 = (\mathbf 1 + \mathbf i + \mathbf j - \mathbf k)/2,\qquad
			\mathbf b=\delta_1 = (\mathbf 1 + \mathbf i + \mathbf j + \mathbf k)/2
		\]
		in the quaternion representation,
		\[
			\mathbf a = \frac12\begin{bmatrix}1+i&1-i\\-i-1&1-i\end{bmatrix},\qquad \mathbf b = \frac12\begin{bmatrix}1+i&1+i\\i-1&1-i\end{bmatrix}.
		\]
		We can check
		that $\mathbf a\mathbf a\mathbf b\mathbf b=-\mathbf j$ is unique among 4-letter words in $\mathbf a, \mathbf b$.
	\end{proof}

	\begin{thm}
		For $x=0011$, there exist $\delta_0,\delta_1\in\SO(3)$ such that for all $y\in\{0,1\}^4$, $\delta_y=\delta_x$ iff $y=x$.
	\end{thm}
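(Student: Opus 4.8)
The plan is to descend the construction of Theorem~\ref{warm-up} through the double cover $\pi\colon\SU(2)\to\SO(3)$, whose kernel is $\{\pm\mathbf 1\}$ and which carries the binary tetrahedral group $\SL(2,3)$ onto $\PSL(2,3)\cong\Alt(4)$, the rotation group of the tetrahedron. Concretely, I would take the very same $\mathbf a,\mathbf b\in\SU(2)$ produced in the proof of Theorem~\ref{warm-up} and set $\delta_0=\pi(\mathbf a)$ and $\delta_1=\pi(\mathbf b)$ in $\SO(3)$. Since $\pi$ is a homomorphism, $\delta_y=\pi(\mathbf a_y)$ for every word $y$, where $\mathbf a_y$ denotes the corresponding product of $\mathbf a$'s and $\mathbf b$'s in $\SU(2)$; consequently $\delta_y=\delta_x$ holds in $\SO(3)$ exactly when $\mathbf a_y\in\{\mathbf a_x,-\mathbf a_x\}$. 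As $\mathbf a_x=\mathbf a\mathbf a\mathbf b\mathbf b=-\mathbf j$, this reduces the theorem to showing that, among all length-$4$ words $y$, the value $-\mathbf j$ is attained only by $x=0011$ and the value $+\mathbf j$ is attained by none.

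The first of these is exactly the content of Theorem~\ref{warm-up}, so the only new work is ruling out $\mathbf a_y=+\mathbf j$. This is the crux: collapsing by $\{\pm\mathbf 1\}$ identifies $+\mathbf j$ with $-\mathbf j$, so the $\SU(2)$-uniqueness from the warm-up does not by itself survive the quotient, and the ``$+\mathbf j$'' collision must be excluded separately. To do so efficiently I would pass to the abelianization $\SL(2,3)\to\SL(2,3)/Q_8\cong\mathbb Z/3$, where $Q_8=\{\pm\mathbf 1,\pm\mathbf i,\pm\mathbf j,\pm\mathbf k\}$ is the quaternion subgroup containing both of $\pm\mathbf j$. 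Because $\mathbf a$ and $\mathbf b$ have order $6$ they lie outside $Q_8$, and a one-line computation gives $\mathbf a\mathbf b^{-1}=\tfrac12(\mathbf 1-\mathbf i+\mathbf j-\mathbf k)$, again of order $6$ and hence also outside $Q_8$; thus $\mathbf a$ and $\mathbf b$ map to the two distinct nonzero classes of $\mathbb Z/3$.

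It then follows that a length-$4$ product $\mathbf a_y$ can land in $Q_8$, and a fortiori equal $\pm\mathbf j$, only when $y$ has exactly two $0$'s and two $1$'s: for such a word the image in $\mathbb Z/3$ is $2(\text{class of }\mathbf a)+2(\text{class of }\mathbf b)\equiv 0$, whereas any other distribution of letters gives a nonzero image. This cuts the analysis down to the six balanced words $0011,0101,0110,1001,1010,1100$, and a direct evaluation of these six products in the binary tetrahedral group finishes the argument: only $0011$ yields $-\mathbf j$, and none yields $+\mathbf j$. Equivalently, since $\PSL(2,3)\cong\Alt(4)$ has order $12$, one may simply tabulate all sixteen elements $\delta_y\in\Alt(4)$ and observe that $\delta_x$ occurs exactly once, the abelianization being merely a labor-saving device that isolates the only candidates for a collision. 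I expect the verification that $+\mathbf j$ is unattained to be the one genuine obstacle, everything else being formal.
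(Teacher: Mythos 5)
Your proof is correct, and its skeleton --- push the pair $\mathbf a,\mathbf b$ of Theorem~\ref{warm-up} down the double cover $\SU(2)\to\SO(3)$ --- is the same as the paper's. Where you diverge is in what you do with that skeleton. The paper's proof is devoted almost entirely to making the images explicit: it factors $\mathbf a=\mathbf r\mathbf s$ and $\mathbf b=\mathbf s\mathbf r$ and identifies $\mathbf r,\mathbf s$ with concrete 90-degree rotations in the $xz$- and $xy$-planes, while the uniqueness of $\delta_{0011}$ in the quotient is left to the exhaustive tabulation of all 24 elements of $\SL(2,3)$ in Figure~\ref{orange} and its projectivized Cayley graph in Figure~\ref{cuboctahedron2}. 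You instead leave the $\SO(3)$ matrices implicit and concentrate on exactly the point that the paper's proof body glosses over: uniqueness of $-\mathbf j$ among length-4 words in $\SU(2)$ does not automatically survive the quotient by $\{\pm\mathbf 1\}$, since some other word might equal $+\mathbf j$. Your reduction via the abelianization $\SL(2,3)\to\SL(2,3)/Q_8\cong\mathbb Z/3$ is a genuinely different device: since $\mathbf a$, $\mathbf b$, and $\mathbf a\mathbf b^{-1}=\tfrac12(\mathbf 1-\mathbf i+\mathbf j-\mathbf k)$ all have order $6$ and hence avoid $Q_8$, the classes of $\mathbf a$ and $\mathbf b$ are the two distinct nonzero residues, and only the six balanced words can land in $Q_8$ at all. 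The remaining finite check, which you describe but do not carry out, does go through: $\mathbf a\mathbf b\mathbf a\mathbf b=\mathbf b\mathbf a\mathbf b\mathbf a=-\mathbf 1$, $\mathbf a\mathbf b\mathbf b\mathbf a=\mathbf k$, $\mathbf b\mathbf a\mathbf a\mathbf b=-\mathbf k$, and $\mathbf b\mathbf b\mathbf a\mathbf a=-\mathbf i$, so $+\mathbf j$ is never attained and $-\mathbf j$ only by $0011$. In short, the paper's route buys explicit rotation matrices; yours buys a self-contained and arguably more careful uniqueness argument with a sixfold reduction in casework.
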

	\begin{proof}
		Another way to express $\mathbf a$ and $\mathbf b$ in Theorem \ref{warm-up} is as
		\[
			e^{i\varphi}\begin{bmatrix}e^{i\Psi}&0\\0&e^{-i\Psi}\end{bmatrix}\begin{bmatrix}\cos\theta&\sin\theta\\ -\sin\theta&\cos\theta\end{bmatrix}\begin{bmatrix}e^{i\Delta}&0\\0&e^{-i\Delta}\end{bmatrix}
		\]
		where $\phi=0$, $\theta=\pi/4$, and $\mathbf a$ has $(\Psi,\Delta)=(0,\pi/4)$ and $\mathbf b$ has $(\Psi,\Delta)=(\pi/4,0)$.
		Thus
		\[
			\mathbf a
			=\frac{1-i}{{2}}\begin{bmatrix}1&1\\ -1&1\end{bmatrix}\begin{bmatrix}i&0\\ 0&1\end{bmatrix}
			=\left(\frac1{\sqrt{2}}\begin{bmatrix}1&1\\ -1&1\end{bmatrix}\right)\left(\frac{1-i}{\sqrt{2}}\begin{bmatrix}i&0\\ 0&1\end{bmatrix}\right)
		\]
		is a product of two matrices $\mathbf r \mathbf s$ in $\SU(2)$. The first one, $\mathbf r$, corresponds \cite{MR0114876} to the $\SO(3)$ rotation
		\[
			\begin{bmatrix}0&0&-1\\0&1&0\\1&0&0\end{bmatrix}
		\]
		which is a 90-degree rotation in the $xz$-plane,
		and the second one, $\mathbf s$, to a 90-degree rotation
		\[
			\begin{bmatrix}0&-1&0\\1&0&0\\0&0&1\end{bmatrix}
		\] in the $xy$-plane in $\SO(3)$. We have
		\[
			\mathbf b
			=\frac{1-i}{{2}}\begin{bmatrix}i&0\\ 0&1\end{bmatrix}\begin{bmatrix}1&1\\ -1&1\end{bmatrix} = \mathbf s\mathbf r.
		\]
	\end{proof}

	One remaining wrinkle, taken care of in Theorem \ref{maining}, is to make sure the other words are not only distinct from $\mathbf a\mathbf a\mathbf b\mathbf b$, but map the start state to distinct vectors from what $\mathbf a\mathbf a\mathbf b\mathbf b$ does.

	\begin{thm}[well known]\label{known}
		The order 24 group $\SL(2, 3)$ is given by $a^3=b^3=c^2=abc$, or equivalently $a^3=b^3=abab$.
	\end{thm}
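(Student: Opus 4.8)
The plan is to prove that the finitely presented group $P=\langle a,b,c\mid a^3=b^3=c^2=abc\rangle$ is isomorphic to $\SL(2,3)$ by sandwiching $\abs{P}$ between $24$ and $24$. First I would record the equivalence of the two presentations. Writing $z$ for the common value $a^3=b^3=c^2=abc$, the relation $abc=c^2$ cancels to $ab=c$, so $c=ab$; substituting this into the remaining relations turns them into $a^3=b^3=(ab)^2=abab$. Hence $P\cong\langle a,b\mid a^3=b^3=abab\rangle$ with $z=a^3=b^3=(ab)^2$, and it suffices to analyze this two-generator group.

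For the lower bound $\abs{P}\ge 24$ I would use the generators $\mathbf a,\mathbf b$ of Theorem \ref{warm-up}. Being order-$6$ unit quaternions of the form $\frac12(\pm\mathbf 1\pm\mathbf i\pm\mathbf j\pm\mathbf k)$ with real part $\tfrac12$, they satisfy $\mathbf a^3=\mathbf b^3=-\mathbf 1$, and a short computation gives $\mathbf a\mathbf b=\mathbf i$, so $\mathbf a\mathbf b\mathbf a\mathbf b=\mathbf i^2=-\mathbf 1$; thus all three relations hold with common value $z\mapsto-\mathbf 1$. Since these two order-$6$ elements generate the binary tetrahedral group $\SL(2,3)\subset\SU(2)$, this yields a surjection $P\twoheadrightarrow\SL(2,3)$ under which $z$ maps to the order-$2$ element $-\mathbf 1$; in particular $z\neq 1$ and $\abs{P}\ge 24$.

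The upper bound $\abs{P}\le 24$ is the crux. The key structural observation is that $z$ is central: it commutes with $a$ because $z=a^3$, and with $b$ because $z=b^3$, and $a,b$ generate $P$. Killing $z$ collapses the relations to $\bar a^3=\bar b^3=(\bar a\bar b)^2=1$, which presents the ordinary $(2,3,3)$ von Dyck group, namely the rotation group $\Alt(4)$ of order $12$. Thus $P$ sits in a central extension $1\to\langle z\rangle\to P\to\Alt(4)\to 1$, and since $P^{\mathrm{ab}}\cong\mathbb{Z}/3\cong\Alt(4)^{\mathrm{ab}}$ one sees $z\in[P,P]$, so the extension is stem and $\langle z\rangle$ is a quotient of the Schur multiplier $M(\Alt(4))\cong\mathbb{Z}/2$. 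Therefore $\operatorname{ord}(z)\le 2$ and $\abs{P}=12\cdot\operatorname{ord}(z)\le 24$.

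The main obstacle is precisely this order bound on the central element $z$ — equivalently, that the relations are not too weak to force $\abs{P}\le 24$. The Schur-multiplier argument above is one route; the cleanest fully self-contained alternative is a Todd--Coxeter coset enumeration of $P$ over the trivial subgroup, and one may instead simply invoke the classical classification of finite binary polyhedral groups. Combining the two bounds gives $\abs{P}=24$ and $\operatorname{ord}(z)=2$, so the surjection onto $\SL(2,3)$ is an isomorphism, which is the assertion.
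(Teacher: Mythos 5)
The paper offers no argument for this theorem at all---it is flagged ``well known'' and left as a citation-level fact---so your proposal is necessarily a different route simply by virtue of being an actual proof, and it is a correct one. The skeleton is sound: the Tietze elimination $c=ab$ correctly converts the three-relator presentation into $\langle a,b\mid a^3=b^3=(ab)^2\rangle$; the quaternions $\mathbf a,\mathbf b$ from Theorem \ref{warm-up} do satisfy $\mathbf a^3=\mathbf b^3=(\mathbf a\mathbf b)^2=-\mathbf 1$ (indeed $\mathbf a\mathbf b=\mathbf i$), giving the surjection onto the binary tetrahedral group and hence $\abs{P}\ge 24$ with $z\ne 1$; and the centrality of $z$, the identification of $P/\langle z\rangle$ with the $(2,3,3)$ von Dyck group $\Alt(4)$, and the stem-extension bound $\operatorname{ord}(z)\le\abs{M(\Alt(4))}=2$ together give $\abs{P}\le 24$. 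Two small points deserve to be made explicit rather than asserted: that $\mathbf a$ and $\mathbf b$ generate all of the binary tetrahedral group (the subgroup they generate contains elements of orders $6$ and $4$, so its order is divisible by $12$, and $\SL(2,3)$ has no index-$2$ subgroup since its abelianization is $\mathbb Z/3$), and the computation $P^{\mathrm{ab}}\cong\mathbb Z/3$ with $z\mapsto 0$ (abelianizing gives $a=2b$ and $3b=0$, whence $z=a^3\mapsto 6b=0$), which is what makes the extension stem. What your approach buys is a self-contained verification modulo two standard inputs (the Schur multiplier of $\Alt(4)$ and the fact that stem-extension kernels are quotients of it); the paper's choice to cite the fact is defensible, but your argument also pins down the extra structural information that $z=a^3=b^3=(ab)^2$ is the unique central involution $-\mathbf 1$, which is exactly what is exploited later when passing to the quotient $\PSL(2,3)\cong\Alt(4)$ in Theorem \ref{faith}.
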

	\begin{thm}[{\cite{tribi}}]
	$\SL(2, 3)$ is isomorphic to the binary tetrahedral group, a subgroup of $\U(2)$.
	\end{thm}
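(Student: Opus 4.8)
\section*{Proof proposal}

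The plan is to exhibit an explicit surjection from a presentation of $\SL(2,3)$ onto the binary tetrahedral group and then finish by an order count. Write $2T$ for the binary tetrahedral group
\[
	\left\{\pm \mathbf 1, \pm \mathbf i, \pm \mathbf j, \pm \mathbf k, \tfrac12(\pm \mathbf 1\pm \mathbf i\pm \mathbf j\pm \mathbf k)\right\}\subseteq \SU(2)\subseteq\U(2),
\]
which one first checks is closed under quaternion multiplication, hence is a group; counting elements gives $8+16=24$. On the other side $\abs{\SL(2,3)}=(3^2-1)(3^2-3)/(3-1)=24$, so the two groups have the same order and it suffices to produce a homomorphism between them that is injective (equivalently surjective), which by equal finite order will then be an isomorphism; the image automatically lands in $\U(2)$ since $2T\subseteq\SU(2)\subseteq\U(2)$.

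First I would invoke the presentation of Theorem \ref{known}, namely $\SL(2,3)=\langle a,b\mid a^3=b^3=(ab)^2\rangle$, and set $z=a^3$, the common central value forced by the relations, which generates the center $\{\pm I\}$. I would then define a candidate homomorphism $\Phi$ on generators by $a\mapsto \mathbf a$ and $b\mapsto \mathbf b$, where $\mathbf a=\tfrac12(\mathbf 1+\mathbf i+\mathbf j-\mathbf k)$ and $\mathbf b=\tfrac12(\mathbf 1+\mathbf i+\mathbf j+\mathbf k)$ are the elements already used in Theorem \ref{warm-up}. By the universal property of the presentation it is enough to verify the three relations in $2T$: since $\mathbf a$ and $\mathbf b$ have real part $\tfrac12$ their eigenvalues are $e^{\pm i\pi/3}$, so $\mathbf a^3=\mathbf b^3=-\mathbf 1$; and a one-line quaternion product shows that $\mathbf a\mathbf b$ has vanishing real part, whence $(\mathbf a\mathbf b)^2=-\mathbf 1$ as well. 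Thus $\mathbf a^3=\mathbf b^3=(\mathbf a\mathbf b)^2$ and $\Phi$ extends to a homomorphism.

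Finally I would argue injectivity cheaply. Because $\Phi(z)=\mathbf a^3=-\mathbf 1\ne\mathbf 1$, the central generator $z$ does not lie in $\ker\Phi$; since every nontrivial normal subgroup of $\SL(2,3)$ contains the center, so that $\langle z\rangle$ is the unique minimal normal subgroup, the kernel must be trivial. An injective homomorphism between finite groups of equal order $24$ is an isomorphism, and its image is contained in $2T\subseteq\U(2)$, giving both assertions at once.

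The main obstacle is bookkeeping rather than conceptual: one must pin down the exact presentation being used, confirm that $\langle z\rangle$ is indeed the unique minimal normal subgroup (equivalently, that $\SL(2,3)$ has no normal subgroup meeting the center trivially, which follows since $-I$ is its only involution), and carry out the product $\mathbf a\mathbf b$ correctly. If one prefers to avoid the normal-subgroup lattice entirely, the same conclusion follows by checking surjectivity directly, i.e.\ that repeated products of $\mathbf a$ and $\mathbf b$ already generate all $24$ elements of $2T$, after which the order count again forces $\Phi$ to be an isomorphism.
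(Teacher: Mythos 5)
The paper gives no argument for this statement at all: it is quoted from \cite{tribi} as a known fact, so there is no internal proof to compare against. Your write-up is therefore a genuine addition rather than a variant, and it is essentially correct. The computations check out: $\mathbf a\mathbf b=\tfrac14(\mathbf 1+\mathbf i+\mathbf j-\mathbf k)(\mathbf 1+\mathbf i+\mathbf j+\mathbf k)=\mathbf i$, so $(\mathbf a\mathbf b)^2=-\mathbf 1$, and the eigenvalue observation gives $\mathbf a^3=\mathbf b^3=-\mathbf 1$, so the presentation of Theorem \ref{known} maps onto $\langle\mathbf a,\mathbf b\rangle\le 2T\subseteq\SU(2)\subseteq\U(2)$, and the two order counts of $24$ are right. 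Two caveats. First, your justification that no nontrivial normal subgroup of $\SL(2,3)$ meets the center trivially --- ``since $-I$ is its only involution'' --- only handles normal subgroups of even order; a hypothetical normal subgroup of order $3$ contains no involution, so you must separately rule it out (e.g.\ $\SL(2,3)/\{\pm I\}\cong\Alt(4)$ has four Sylow $3$-subgroups, hence so does $\SL(2,3)$, and none is normal), or simply use the direct surjectivity check you mention, which sidesteps the normal-subgroup lattice entirely. Second, your proof is only as self-contained as Theorem \ref{known}, which the paper also states without proof; if one wants a citation-free argument one must either verify that presentation of $\SL(2,3)$ or instead exhibit an explicit isomorphism (say by mapping $\mathbf i,\mathbf j,\mathbf k$ to the quaternion group inside $\SL(2,3)$ and extending). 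Neither point undermines the argument; both are the kind of bookkeeping you already flagged.
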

	The group $\Alt(4)$ does serve as complexity bound for 0011. It is not a subgroup of $\U(2)$ \cite{tribi}, but:
	\begin{thm}\label{faith}
		There is a faithful, irreducible representation of $\Alt(4)\cong\PSL(2,3)$ as a subgroup of $\SL(2,3)$ of index 2 and as a subgroup of $\PU(2)$.
	\end{thm}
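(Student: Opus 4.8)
The plan is to obtain the whole statement from the binary tetrahedral copy of $\SL(2,3)$ inside $\SU(2)\subset\U(2)$ already displayed in the proof of Theorem~\ref{warm-up}, and then to push it down to $\PU(2)=\U(2)/\U(1)$. First I would check that the only scalar matrices among the $24$ listed quaternions are $\pm\mathbf 1$ (every other element has a nonzero $\mathbf i$, $\mathbf j$, or $\mathbf k$ component), so that the scalars lying in $\SL(2,3)$ are exactly its center $Z=\{\pm\mathbf 1\}$. Composing the inclusion $\SL(2,3)\hookrightarrow\U(2)$ with the projection $\pi\colon\U(2)\to\PU(2)$ then has kernel $\SL(2,3)\cap\U(1)=Z$ of order $2$, so $\pi$ descends to an injection $\SL(2,3)/Z\hookrightarrow\PU(2)$. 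Since $\SL(2,3)/Z=\PSL(2,3)\cong\Alt(4)$ (as recorded in Section~\ref{0011}), this injection is the desired faithful representation, and its image is a subgroup of $\PU(2)$ of order $12$. The simultaneous appearance of $\SL(2,3)$ and the order-$2$ kernel $Z$ is exactly what links the two halves of the statement: $\Alt(4)$ is exhibited both as the degree-$2$ quotient $\SL(2,3)/Z$ of the binary tetrahedral group and as its image in $\PU(2)$.

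Faithfulness is then automatic, since we have quotiented out precisely the kernel. For irreducibility I would argue on the linear lift: the defining inclusion $\SL(2,3)\hookrightarrow\U(2)$ is an irreducible $2$-dimensional representation, because a reducible $2$-dimensional representation of a finite group decomposes over $\mathbb C$ into a sum of two characters and hence has abelian image, whereas $\SL(2,3)$ is nonabelian and the inclusion is faithful. Irreducibility of the projective representation is inherited from that of its linear lift, so $\Alt(4)$ embeds irreducibly in $\PU(2)$.

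The conceptual crux — and the reason both $\SL(2,3)$ and $\PU(2)$ must appear — is that $\Alt(4)$ has \emph{no} faithful $2$-dimensional \emph{linear} representation at all: its irreducible characters have degrees $1,1,1,3$, so the only faithful irreducible representation is the $3$-dimensional one realizing $\Alt(4)\subset\SO(3)$, consistent with $\Alt(4)\not\subset\U(2)$. A faithful $2$-dimensional representation can therefore exist only projectively, and the smallest linear group carrying it is the Schur cover $\SL(2,3)$, whose center $Z$ becomes invisible in $\PU(2)$. I expect the only genuinely delicate bookkeeping to be confirming that the order-$12$ image is $\Alt(4)$ rather than some other group of order $12$; this I would settle by transporting the presentation $a^3=b^3=abab$ of Theorem~\ref{known} to the quotient and identifying $Q_8/Z$ as the normal Klein four subgroup, exhibiting the image as $V_4\rtimes\mathbb Z/3\cong\Alt(4)$.
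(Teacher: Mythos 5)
Your proposal is correct and rests on the same central idea as the paper's proof---passing from the binary tetrahedral copy of $\SL(2,3)$ inside $\SU(2)$ to its quotient by the center $\{\pm\mathbf 1\}=\{1,a^3\}$---but the verification is carried out quite differently. The paper makes the quotient explicit and combinatorial: it defines the equivalence $u\equiv v\iff u\in\{v,a^3v\}$ and then lists, in Figure \ref{orange}, all $24$ elements of $\SL(2,3)$ as words in the generators $a,b$ of Theorem \ref{known}, pairing them into $12$ classes matched with elements of $\Alt(4)$. You instead argue structurally: the only scalars among the $24$ quaternions are $\pm\mathbf 1$, so the composite $\SL(2,3)\hookrightarrow\U(2)\to\PU(2)$ has kernel $Z=\{\pm\mathbf 1\}$ and descends to an embedding of $\SL(2,3)/Z\cong V_4\rtimes\mathbb Z/3\cong\Alt(4)$. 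Both routes are sound; the paper's enumeration doubles as a computational tool that is reused in Figure \ref{cuboctahedron2}, while your argument is shorter, explains \emph{why} the result must pass through a projective representation (the character degrees $1,1,1,3$ of $\Alt(4)$ forbid a faithful linear $2$-dimensional representation), and---unlike the paper's proof---explicitly addresses the irreducibility claim, via the observation that a reducible $2$-dimensional representation of a finite group has abelian image. One further point in your favor: the statement's phrase ``subgroup of $\SL(2,3)$ of index $2$'' is loose, since $\SL(2,3)$ has no subgroup of order $12$; your reading of it as the quotient of $\SL(2,3)$ by its order-$2$ center is exactly what the paper's proof implements.
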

	\begin{proof}
		Let $a$ be as in Theorem \ref{known}.
		We define an equivalence relation $\equiv$ by $u\equiv v\iff u\in\{v,a^3v\}$.
		It is required to show that each element of our $\SL(2, 3)$ is equivalent to an element of $\Alt(4)$.
		This is done in detail in Figure \ref{orange}.
	\end{proof}
	\begin{figure}
		\begin{eqnarray}
			1,\\
			a,\\
			b,\\
			a^2 &=& {\color{red}bab},\\
			ab,\\
			ba,\\
			b^2 &=& {\color{red}aba},\\
			a^3 &=& b^3=baba=abab\equiv{\color{orange} 1},\\
			{\color{red}a^2b} &=& bab^2 ,\\ 
			{\color{red}ab^2} &=& {\color{blue}a^2ba},\\
			{\color{red}ba^2} &=& {\color{blue}b^2ab} = aba^2b,\\
			{\color{red}b^2a} &=& {\color{blue}aba^2} = ab^2ab,\\
			a^4 &=& ab^3=b^3a=a^2bab = ababa \equiv{\color{orange}a},\\
			a^3b &=& ba^3=b^4=abab^2\equiv{\color{orange}b},\\
			{\color{blue}a^2b^2} &=& a^3ba \equiv{\color{orange}ba},\\ 
			{\color{blue}ab^2a} &=& a^2ba^2,\\
			{\color{blue}ba^2b}&\equiv&{\color{orange}abba},\\
			{\color{blue}b^2a^2} &=& a^4b  = aba^3 = ab^4\equiv{\color{orange}ab},\\ 
			a^5 &=& a^2b^3  =ab^3a\equiv{\color{orange}a^2},\\ 
			a^3b^2 &\equiv&{\color{orange}b^2},\\ 
			a^2b^2a &\equiv&{\color{orange}baa},\\
			ab^2a^2 &\equiv&{\color{orange}aab},\\
			ba^2b^2&\equiv&{\color{orange}bba},\\
			b^2a^2b&\equiv&{\color{orange}abb}.
		\end{eqnarray}
		\caption{
			The 24 elements of $\SL(2, 3)$.
			All strings of length at most 2 are unique of their length.
			By symmetry, words of length 5 starting with $b$ are not written down.
		}\label{orange}
	\end{figure}
	The representation from Theorem \ref{faith} is used in the proof of Theorem \ref{maining}.
	\begin{thm}\label{maining}
		$Q_{sf}(0011)=2$.
	\end{thm}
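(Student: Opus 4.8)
The plan is to prove the two inequalities $Q_{sf}(0011)\ge 2$ and $Q_{sf}(0011)\le(2,12)$ separately. The lower bound is immediate: the semi-classical requirement forces distinct standard basis vectors $\alpha=e_1\ne e_2=\omega$, which requires the underlying space to have dimension at least $2$, so $q\ge 2$. For the upper bound I would take $\delta_0=[\mathbf a]$ and $\delta_1=[\mathbf b]$, the images in $\PU(2)$ of the unit quaternions $\mathbf a,\mathbf b$ of Theorem \ref{warm-up}; by Theorem \ref{faith} these generate a copy of $\Alt(4)\cong\PSL(2,3)$ of order $12$ sitting inside $\PU(2)\cong\SO(3)$. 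Writing $g_0=[\mathbf a\mathbf a\mathbf b\mathbf b]=[\mathbf j]$, Theorem \ref{warm-up} tells us that $g_0$ is realized by the word $0011$ and by no other word of length $4$.

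The subtlety — the ``remaining wrinkle'' — is that group-element uniqueness of $g_0$ does not by itself yield the semi-classical property, because several distinct group elements can send the line $[e_1]$ to the line $[e_2]$. Indeed, in the standard basis both $\mathbf j$ and $\mathbf k$ are anti-diagonal, and one checks that $0110$ and $1001$ then also map $[e_1]$ to $[e_2]$; so the naive choice $\alpha=[e_1],\omega=[e_2]$ fails. To repair this I would exploit conjugation freedom: replacing $(\delta_0,\delta_1)$ by $(h\delta_0 h^{-1},h\delta_1 h^{-1})$ for $h\in\PU(2)$ turns the acceptance condition $\delta_y[e_1]=[e_2]$ into $\delta_y\alpha'=\omega'$ for the pair of lines $\alpha'=[h^{-1}e_1]$, $\omega'=[h^{-1}e_2]$, which may be any ordered pair of lines at the same spherical distance as $[e_1],[e_2]$ on $\CP^1\cong S^2$, that is, any antipodal (orthogonal) pair.

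It then suffices to find an antipodal pair $(\alpha',\omega')$ with $\omega'=g_0\alpha'$ that is hit by $0011$ alone. Since $g_0=[\mathbf j]$ is an involution, it acts on $\CP^1\cong S^2$ as a rotation by $\pi$, and its equatorial great circle consists exactly of the points $\alpha'$ whose image $g_0\alpha'$ is the antipode of $\alpha'$; this is an infinite set of admissible candidates. For every word $y\ne 0011$ of length $4$, Theorem \ref{warm-up} guarantees $\delta_y\ne g_0$, so $g_0^{-1}\delta_y$ is a nontrivial element of $\PU(2)\cong\SO(3)$ and fixes at most two points of $\CP^1$; hence the ``bad'' values of $\alpha'$, for which some other word also sends $\alpha'$ to $\omega'=g_0\alpha'$, form a finite set. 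Choosing $\alpha'$ on the equator of $g_0$ away from this finite set, and letting $h$ be any rotation carrying $(\alpha',\omega')$ to $([e_1],[e_2])$, yields $\delta_0,\delta_1$ generating a group conjugate to $\Alt(4)$ — still of order $12$ — for which $0011$ is the unique accepted word of its length. This gives $Q_{sf}(0011)\le(2,12)$ and hence $Q_{sf}(0011)=2$.

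I expect the main obstacle to be exactly this passage from group-element uniqueness to projective (vector-image) uniqueness: one must recognize that the three anti-diagonal collisions in the standard basis are an artifact of the basis and can be removed by a generic, distance-preserving change of lines while keeping the two orthogonal states demanded by $Q_s$. The remaining ingredients — antipodality of $[e_1]$ and $[e_2]$, the fact that a $\pi$-rotation sends its equator to antipodes, and the two-fixed-point bound for nontrivial elements of $\SO(3)$ — are routine. As an alternative to this genericity argument, one could instead establish the statement by the finite computation implicit in Figure \ref{orange}, reading off the images under $[e_1]\mapsto[e_2]$ of all sixteen length-$4$ words after a single suitable conjugation.
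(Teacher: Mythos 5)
Your argument is correct, and its skeleton matches the paper's: both proofs start from the binary tetrahedral elements $\mathbf a,\mathbf b$ of Theorem \ref{warm-up} and both resolve the ``remaining wrinkle'' by a unitary change of basis that repositions a generic initial state at $e_1$ and its image under $E_{0011}$ at $e_2$. Where you genuinely diverge is in how the generic choice is certified. The paper picks the concrete vector $v=(1,2)^T$, builds $C=[v\mid E_{0011}v]/\sqrt{\det D}$ (orthogonal because $E_{0011}=-\mathbf j$ is a real rotation), sets $U_j=C^{-1}E_jC$, and then verifies $U_ye_1\ne e_2$ for the other fifteen words by an explicit Octave computation. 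You instead prove abstractly that a good choice exists: the admissible initial lines form the equatorial great circle of the $\pi$-rotation $[\mathbf j]$ in $\PU(2)\cong\SO(3)$, while the bad ones are fixed points of the nontrivial rotations $g_0^{-1}\delta_y$, hence at most $30$ points in all. Your route buys a computation-free existence proof and makes transparent \emph{why} genericity works; the paper's route buys explicit matrices $U_0,U_1$ (and the cuboctahedron picture of Figures \ref{cuboctahedron3} and \ref{cuboctahedron2}). One small point to tighten: you attribute to Theorem \ref{warm-up} the statement that no other length-$4$ word equals $g_0$ \emph{in $\PU(2)$}, but that theorem is stated in $\SU(2)$; you additionally need that no other length-$4$ word equals $+\mathbf j$ (i.e.\ $-\mathbf a\mathbf a\mathbf b\mathbf b$), which is true and can be read off from Figure \ref{orange}, where $aabb$ sits alone in its sign-equivalence class among words of length $4$. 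Without that strengthening your key step ``$g_0^{-1}\delta_y$ is nontrivial'' would not follow.
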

	\begin{proof}
		Let $v=\begin{bmatrix}v_1\\ v_2\end{bmatrix}$ with $v_1,v_2\in\mathbb R$.
		Let
		\[
			E_0=\mathbf a = \frac12\begin{bmatrix}1+i&1-i\\-1-i&1-i\end{bmatrix},\qquad E_1=\mathbf b = \frac12\begin{bmatrix}1+i&1+i\\-1+i&1-i\end{bmatrix}.
		\]
		Let
		\[
			D = \begin{bmatrix} v_1 & -v_2\\ v_2 & v_1\end{bmatrix} = [v\mid E_{0011}v],\qquad C = \frac1{\sqrt{\det{D}}}D.
		\]
		Let
		\[
			U_j = C^{-1}E_jC,\qquad j\in\{0,1\}.
		\]
		Then it follows that
		\[
			CU_{0011}\begin{bmatrix}1\\ 0\end{bmatrix}
			=E_{0011}C\begin{bmatrix}1\\ 0\end{bmatrix}
			=E_{0011}\begin{bmatrix}v_1\\ v_2\end{bmatrix}
			=C\begin{bmatrix}0\\ 1\end{bmatrix}.
		\]
		Hence
		\[
			U_{0011}\begin{bmatrix}1\\ 0\end{bmatrix}=\begin{bmatrix}0\\ 1\end{bmatrix}.
		\]
		Since fortunately our $E_0$ and $E_1$ satisfy $E_{0011} = -\mathbf j$,
		$C$ is orthogonal, and in particular $C$ is unitary.
		If we now choose $v=\begin{bmatrix}1\\ 2\end{bmatrix}$, then $v$ is sufficiently generic that $U_ye_1\ne e_2$ as elements of $\CP^1$ for all $y\in\{0,1\}^4\setminus\{x\}$.
		We have verified as much with an Octave computation (see Figure \ref{cuboctahedron3} and Figure \ref{cuboctahedron2}). We have
		\begin{eqnarray*}
			C=\frac1{\sqrt{5}}\begin{bmatrix}1&-2\\ 2&1\end{bmatrix},\qquad
			C^{-1}=\frac1{\sqrt{5}}\begin{bmatrix}1&2\\ -2&1\end{bmatrix},
		\end{eqnarray*}
		\begin{eqnarray*}
			U_0 =\frac1{10}
			\begin{bmatrix}
			   5 + i&   5 + 7i\\
			  -5 + 7i&   5 - i
			\end{bmatrix}\qquad\text{and}\qquad
			 U_1 =\frac1{10}
			\begin{bmatrix}
			   5 - 7i&   5 + i\\
			  -5 + i&   5 + 7i
			\end{bmatrix}.
		\end{eqnarray*}
	\end{proof}
	\begin{rem}
		It is still a question what the nature of the ``complexity'' $Q$ and $Q_s$ are picking out is. If it is anything like $A_{\perm}$ it may be less than intuitive.
		However, there is some reason to believe that quantum automatic complexity is better than permutation automatic complexity at distinguishing strings of the same length.
		For permutation automatic complexity we do not know any example of strings of the same length having distinct complexity, but
		for quantum automatic complexity, $0^{60}1^{60}$ and $0^{120}$ form such an example. The latter has complexity at most 2
		whereas the former does not (Theorem \ref{gniniam}).
	\end{rem}
	\begin{figure}
		\makebox[\textwidth][c]{
		\begin{tikzpicture}[->,>=stealth',shorten >=1pt,auto,node distance=2cm,semithick]
			\tikzstyle{every state}=[draw=black,text=black]
			\node[state] (A) {\tiny$0$};
			\node[]      (B) [right of=A]{};
			\node[state] (C) [above of=B] {\tiny$\frac{16}{13}-\frac{15}{13}i$};
			\node[]      (D) [above of=A] {};
			\node[state] (E) [left of=D] {\tiny$-\frac9{13}+\frac{20}{13}i$};
			\node[state] (F) [above of=E] {\tiny $\frac{16}{13}+\frac{15}{13}i$};
			\node[state] (G) [above of=C] {\tiny $\frac{9}{37}+\frac{20}{37}i$};
			\node[]      (H) [above of=D] {};
			\node[state] (I) [above of=H] {\tiny $\frac34$};
			\node[state] (J) [right of=C] {\tiny $-\frac43$};
			\node[state] (K) [left of=E]  {\tiny $\infty$};
			\node        (L) [below of=A] {};
			\node[state] (M) [right of=L] {\tiny $-\frac{16}{37}-\frac{15}{37}i$};
			\node[state] (N) [left of=L] {\tiny $\frac9{37}-\frac{20}{37}i$};
			\node        (O) [right of=M] {};
			\node[state] (P) [below of=O] {\tiny $-\frac{9}{13}-\frac{20}{13}i$};
			\node        (Q) [left of=N] {};
			\node[state] (R) [below of=Q] {\tiny $-\frac{16}{37}+\frac{15}{37}i$};
			\path (A) edge [dashed] (E)
			(E) edge [dashed] (C)
			(C) edge [dashed] (A)
			(F) edge [color=red] (E)
			(C) edge [color=red] (G)
			(G) edge [dashed] (F)
			(F) edge [dashed] (I)
			(I) edge [dashed] (G)
			(G) edge [color=red] (J)
			(J) edge [color=red] (C)
			(E) edge [color=red] (K)
			(K) edge [color=red] (F)
			(M) edge [color=red] (N)
			(N) edge [color=red] (A)
			(A) edge [color=red] (M)
			(M) edge [dashed] (J)
			(K) edge [dashed] (N)
			(N) edge [dashed] (R)
			(R) edge [dashed] (K)
			(J) edge [dashed] (P)
			(P) edge [dashed] (M)
			(R) edge [color=red] (P)
			(P) edge [bend right=70, color=red] (I)
			(I) edge [bend right=70, color=red] (R);
		\end{tikzpicture}
		}
		\caption{
			Quantum complexity witness having the shape of a cuboctahedron.
			The label $\alpha$ represents the projective point $[1:\alpha]$.
			The initial state is $[1:0]$ denoted by 0 and the accept state is $[0:1]$ denoted by $\infty$.
			Dashed lines indicate multiplication by $U_0$.
			Solid lines indicate multiplication by $U_1$.
		}\label{cuboctahedron3}
	\end{figure}
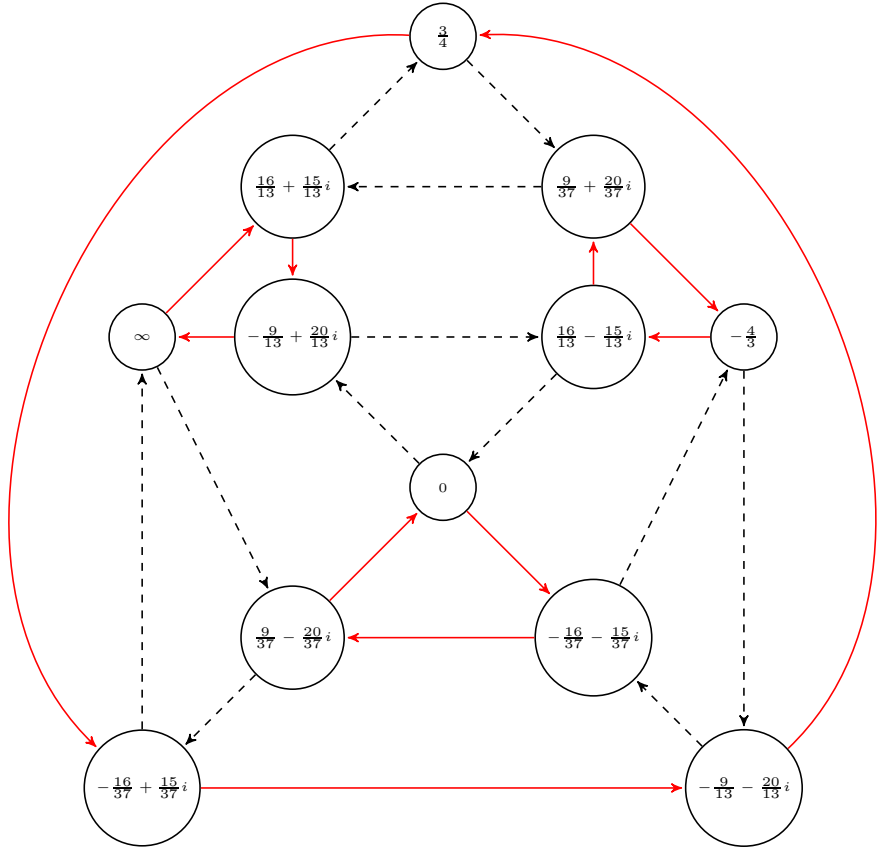
\begin{figure}
	\makebox[\textwidth][c]{
	\begin{tikzpicture}[->,>=stealth',shorten >=1pt,auto,node distance=2cm,semithick]
		\tikzstyle{every state}=[draw=black,text=black]
		\node[state] (A) {$I,(ab)^2,(ba)^2$};
		\node[]      (B) [right of=A]{};
		\node[state] (C) [above of=B] {$a^2$};
		\node[]      (D) [above of=A] {};
		\node[state] (E) [left of=D] {$a^4, b^3a, ab^3$};
		\node[state] (F) [above of=E] {$abaa$};
		\node[state] (G) [above of=C] {$bbab$};
		\node[]      (H) [above of=D] {};
		\node[state] (I) [above of=H] {$abba, baab$};
		\node[state] (J) [right of=C] {$bbaa$};
		\node[state] (K) [left of=E]  {$aabb$};
		\node        (L) [below of=A] {};
		\node[state] (M) [right of=L] {$ba^3,b^4,a^3b$};
		\node[state] (N) [left of=L] {$b^2$};
		\node        (O) [right of=M] {};
		\node[state] (P) [below of=O] {$babb$};
		\node        (Q) [left of=N] {};
		\node[state] (R) [below of=Q] {$aaba$};
		\path (A) edge [dashed] (E)
		(E) edge [dashed] (C)
		(C) edge [dashed] (A)
		(F) edge [color=red] (E)
		(C) edge [color=red] (G)
		(G) edge [dashed] (F)
		(F) edge [dashed] (I)
		(I) edge [dashed] (G)
		(G) edge [color=red] (J)
		(J) edge [color=red] (C)
		(E) edge [color=red] (K)
		(K) edge [color=red] (F)
		(M) edge [color=red] (N)
		(N) edge [color=red] (A)
		(A) edge [color=red] (M)
		(M) edge [dashed] (J)
		(K) edge [dashed] (N)
		(N) edge [dashed] (R)
		(R) edge [dashed] (K)
		(J) edge [dashed] (P)
		(P) edge [dashed] (M)
		(R) edge [color=red] (P)
		(P) edge [bend right=70, color=red] (I)
		(I) edge [bend right=70, color=red] (R);
	\end{tikzpicture}
	}
	\caption{
		Another view of the quantum complexity witness having the shape of a cuboctahedron of Figure \ref{cuboctahedron3}.
	}\label{cuboctahedron2}
\end{figure}
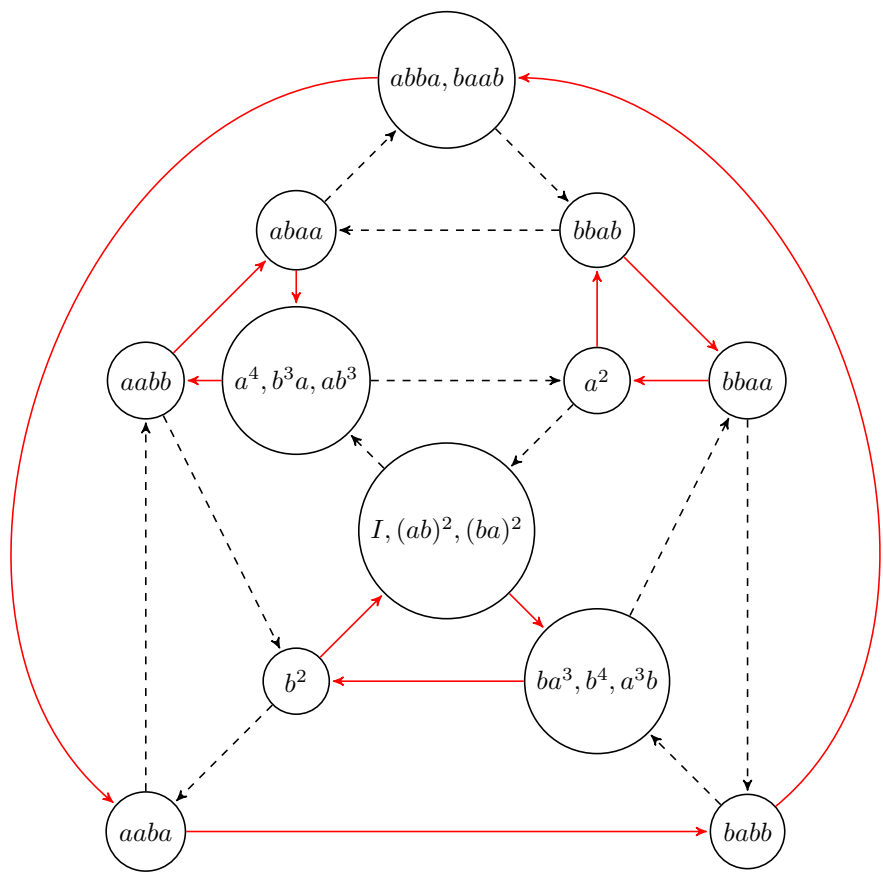
\bibliographystyle{IEEEtran}
\bibliography{UCNC17}
\end{document}